\newtheorem{theorem}{Theorem}
\newtheorem{lemma}{Lemma}
\newcommand{\figwidth}{8.8}
\begin{document}
\title{Joint Power Allocation and Beamforming for Non-Orthogonal Multiple Access (NOMA) in 5G Millimeter-Wave Communications}


\author{Zhenyu Xiao,~\IEEEmembership{Senior Member,~IEEE,}
        Lipeng Zhu,
        Jinho Choi,~\IEEEmembership{Senior Member,~IEEE,}
 Pengfei Xia,~\IEEEmembership{Senior Member,~IEEE}
and Xiang-Gen Xia,~\IEEEmembership{Fellow,~IEEE}
\thanks{This work was supported in part by the National Natural Science Foundation of China (NSFC) under grant Nos. 61571025 and 91538204, and in part by the Foundation for Innovative Research Groups of the National Natural Science Foundation of China under grant No. 61521091.}
\thanks{L. Zhu and Z. Xiao are with the School of Electronic and Information Engineering, Beihang University, Beijing 100191, China; Beijing Key Laboratory for Network-Based Cooperative Air Traffic Management, and Beijing Laboratory for General Aviation Technology, Beijing 100191, China; Collaborative Innovation Center of Geospatial Technology, Wuhan 430079, China.}
\thanks{J. Choi is with the School of Electrical Engineering and Computer Science, Gwangju Institute of Science and Technology (GIST), Korea.}
\thanks{P. Xia is with the School of Electronics and Information Engineering and the Key Laboratory of Embedded System and Service Computing, Tongji University, Shanghai 200092, P.R. China.}
\thanks{X.-G. Xia is with the College of Information Engineering, Shenzhen University, Shenzhen, China, and the Department of Electrical and Computer Engineering, University of Delaware, Newark, DE 19716, USA.}
}

\maketitle




\begin{abstract}
In this paper we explore non-orthogonal multiple access (NOMA) in millimeter-wave (mmWave) communications (mmWave-NOMA). In particular, we consider a typical problem, i.e., maximization of the sum rate of a 2-user mmWave-NOMA system. In this problem, we need to find the beamforming vector to steer towards the two users simultaneously subject to an analog beamforming structure, while allocating appropriate power to them. As the problem is non-convex and may not be converted to a convex problem with simple manipulations, we propose a suboptimal solution to this problem. The basic idea is to decompose the original joint beamforming and power allocation problem into two sub-problems which are relatively easy to solve: one is a power and beam gain allocation problem, and the other is a beamforming problem under a constant-modulus constraint. Extension of the proposed solution from 2-user mmWave-NOMA to more-user mmWave-NOMA is also discussed. Extensive performance evaluations are conducted to verify the rational of the proposed solution, and the results also show that the proposed sub-optimal solution achieve close-to-bound sum-rate performance, which is significantly better than that of time-division multiple access (TDMA).
\end{abstract}

\begin{IEEEkeywords}
NOMA, Non-orthogonal multiple access, mmWave-NOMA, millimeter-wave communications, downlink beamforming, power allocation.
\end{IEEEkeywords}



%
\IEEEpeerreviewmaketitle

\section{Introduction}
\IEEEPARstart{A}{s} the fifth generation (5G) wireless mobile communication comes closer in the past a few years, the requirements gradually become clearer, and among them the large aggregate capacity is one of the most critical issues \cite{andrews2014will}. In the approach to high aggregate capacity, millimeter-wave (mmWave) communication has been considered as one of the major candidate technologies \cite{andrews2014will,niu2015survey,rapp2013mmIEEEAccess}, in addition to ultra-densification of cells and massive multiple-input multiple-output (MIMO). Indeed, due to the abundant frequency spectrum resource, mmWave communication promises a much higher capacity than the legacy low-frequency (i.e., micro-wave band) mobile communications. In addition, subject to high propagation loss, mmWave communication is quite suitable for small cell scenarios, which is preferable in 5G.

On the other hand, when applying mmWave communication to mobile cellular, its benefit will highly rely on multiple access strategies. Subject to the limited number of resource blocks, the existing time/frequency/code division multiple access (TDMA/FDMA/CDMA) may face stringent challenges in supporting a great increase of users in the future 5G paradigm, which is supposed to connect massive users and devices \cite{andrews2014will}. Moreover, due to the diversity of users in 5G cellular, the requirements on data rate will be quite different between different users. To allocate resource with the unit of a single resource block to users with varying rate requirements may be a waste of resource. In brief, the inefficient orthogonal multiple access (OMA), i.e., TDMA/FDMA/CDMA, may offset the benefit of mmWave communication in the future 5G cellular. Different from these OMA schemes, the non-orthogonal multiple access (NOMA) strategy is able to support multiple users in the same (time/frequency/code) resource block realized by superposition coding in the power domain \cite{ding2014performance,saito2013non,Choi2014NOMA,Ding2015Cooperative,Dai2015NOMA5G,Benjebbour2013ConceptNOMA,Saito2013syslevl,Ding2017random,Daill2017}.
By exploiting corresponding successive interference cancellation (SIC) in the power domain at receivers, multiple users can be distinguished from each other, thus both the number of users and the spectrum efficiency can be increased.

{ The use of NOMA also harvests benefits in mmWave communications. When mmWave communication is used for 5G, a big challenge is to support massive users and devices. However, subject to the hardware cost, the number of radio-frequency (RF) chains in an mmWave device is usually much smaller than that of antennas \cite{Xia_2011_60GHz_Tech,xia_2008_prac_ante_traning,wang_2009_beam_codebook,alkhateeb2014mimo,roh2014millimeter,sun2014mimo}. As a result, the maximal number of users that can be served within one time/frequency/code resource block is very limited, i.e., no larger than the number of RF chains. In such a case, NOMA is with significance for mmWave communication to greatly increase the number of users, and meanwhile increase the usage efficiency of the acquired spectrum to support the exponential traffic growth. On the other hand, the highly direction feature of mmWave propagation makes the users' channels (along the same or similar direction) highly correlated, which facilitates the integration of NOMA in mmWave communication, i.e., mmWave-NOMA \cite{Ding2017random,Daill2017}.

An intrinsic difference between mmWave-NOMA and conventional NOMA is that, beamforming with a large antenna array is usually adopted in mmWave-NOMA \cite{Xia_2011_60GHz_Tech,xia_2008_prac_ante_traning,wang_2009_beam_codebook,alkhateeb2014mimo,roh2014millimeter,sun2014mimo}, which means that power allocation intertwines with beamforming. In \cite{Ding2017random}, a pioneer work of combining NOMA in mmWave communications, random steering single-beam forming was adopted, which can work only in a special case that the NOMA users are close to each other. In \cite{Daill2017}, the new concept of beamspace MIMO-NOMA with a lens-array hybrid beamforming structure was firstly proposed to use multi-beam forming to serve multiple NOMA users with arbitrary locations, thus the limit that the number of supported users cannot be larger than the number of RF chains can be broken. However, the power allocation problem is studied under fixed beam pattern when lens array is considered. In \cite{Fang2016rescealloc}, sub-channel assignment and power allocation are optimized to maximize the energy efficiency for a downlink NOMA network, while beamforming is not included.} Different from \cite{Ding2017random,Daill2017,Fang2016rescealloc}, we consider joint power allocation and beamforming to maximize the sum rate of a 2-user mmWave-NOMA system using an analog beamforming structure with a phased array.


%

As the considered problem is non-convex and may not be converted to a convex problem with simple manipulations, to solve the problem directly by using the existing optimization tools is infeasible. On the other hand, to directly search the optimal solution is computationally prohibitive because the number of variables is large in general. In this paper, we propose a suboptimal solution to this problem. The basic idea is to decompose the original joint beamforming and power allocation problem into two sub-problems: one is a power and beam gain allocation problem, and the other is a beamforming problem under the CM constraint. Although the original problem is difficult to solve, the two sub-problems are relatively easy to solve, and thus we are able to obtain a sub-optimal solution. Extension of the proposed solution from 2-user mmWave-NOMA to more-user mmWave-NOMA is also discussed. Extensive performance evaluations are conducted to verify the rational of the proposed solution, and the results show that the proposed sub-optimal solution achieve close-to-bound sum-rate performance, which is significantly better than that of time-division multiple access (TDMA).

The rest of this paper is organized as follows. The system model and problem formulation are introduced in Section II. The solution is proposed in detail in Section III. Performance evaluation and comparison are conducted in Section IV. Lastly, the conclusions are drawn in Section V.

Symbol Notation: $a$ and $\mathbf{a}$ denote a scalar variable and a vector, respectively. $(\cdot)^{\rm{*}}$, $(\cdot)^{\rm{T}}$ and $(\cdot)^{\rm{H}}$ denote conjugate, transpose and conjugate transpose, respectively. $|\cdot|$ and $\|\cdot\|$ denote the absolute value and two-norm, respectively. $\mathbb{E}(\cdot)$ denotes the expectation operation. {$\mathrm{Re}(\cdot)$ denotes the real component of a complex number.} $[\mathbf{a}]_i$ denotes the $i$-th entry of $\mathbf{a}$.

\section{System Model and Problem Formulation}
%
%
%
%

\subsection{System model}
Without loss of generality, we consider a downlink multiuser scenario in this paper as shown in Fig. \ref{fig:system}, where a base station (BS) equipped with an $N$-element antenna array serves two users with a single antenna\footnote{In the case that the users also use an antenna array, Rx beamforming can be done first. Then the reception processing at each user can be seen equivalent to a single-antenna receiver, and the proposed solution in this paper can be used.}. At the BS, each antenna branch has a phase shifter and a power amplifier (PA) to drive the antenna. Generally, all the PAs have the same scaling factor. Thus, the beamforming vector, i.e., the antenna weight vector (AWV), has constant-modulus (CM) elements. The BS transmits a signal $s_{i}$ to User $i$ ($i=1,~2$), where $\mathbb{E}(\left | s_{i} \right |^{2})=1$, with transmission power $p_{i}$. The total transmission power is restricted to $P$. With 2-user NOMA, $s_{1}$ and $s_{2}$ are superimposed as
\begin{equation}
s=\sqrt{p_{1}}s_{1}+\sqrt{p_{2}}s_{2}
\end{equation}

\begin{figure}[t]
\begin{center}
  \includegraphics[width=\figwidth cm]{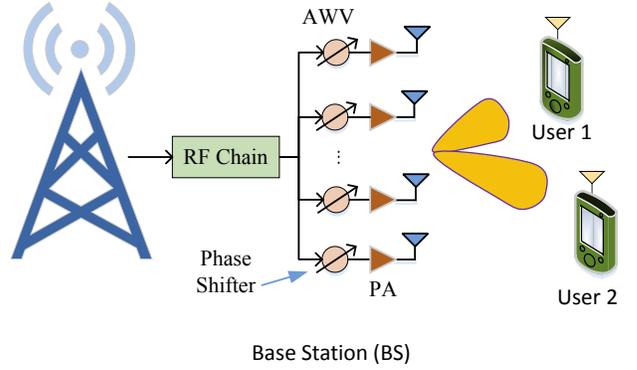}
  \caption{Illustration of an mmWave mobile cell, where one BS with $N$ antennas serves multiple users with one single antenna.}
  \label{fig:system}
\end{center}
\end{figure}

The received signals at User 1 and User 2 are
\begin{equation}
\left\{
   \begin{aligned}
y_{1}=\mathbf{h}_{1}^{\rm{H}}\mathbf{w}(\sqrt{p_{1}}s_{1}+\sqrt{p_{2}}s_{2})+n_{1} \\
y_{2}=\mathbf{h}_{2}^{\rm{H}}\mathbf{w}(\sqrt{p_{1}}s_{1}+\sqrt{p_{2}}s_{2})+n_{2}
   \end{aligned}
  \right.
  \end{equation}
where  ${\bf{h}}_{i}$ is channel response vectors between User $i$ and the BS, ${\bf{w}}$ denotes a CM beamforming vector with $|[{\bf{w}}]_k|=\frac{1}{\sqrt{N}}$ for $k=1,2,...,N$, and $n_{i}$ denotes the Gaussian white noise at User $i$ with power $\sigma^{2}$.

{
The channel between the BS and User $i$ is an mmWave channel. Subject to limited scattering in the mmWave band, multipath is mainly caused by reflection. As the number of the multipath components (MPCs) is small in general, the mmWave channel has directionality and appears spatial sparsity in the angle domain \cite{peng2015enhanced,wang2015multi,Lee2014exploiting,Gao2016ChannelEst,xiao2016codebook,alkhateeb2014channel}. Different MPCs have different angles of departure (AoDs).} Without loss of generality, we adopt the directional mmWave channel model assuming a uniform linear array (ULA) with a half-wavelength antenna space. Then an mmWave channel can be expressed as \cite{peng2015enhanced,wang2015multi,Lee2014exploiting,Gao2016ChannelEst,xiao2016codebook,alkhateeb2014channel}
\begin{equation} \label{eq_oriChannel}
\bar{\mathbf{h}}_{i}=\sum_{\ell=1}^{L_i}\lambda_{i,\ell}\mathbf{a}(N,\Omega_{i,\ell})
\end{equation}
where $\lambda_{i,\ell}$, $\Omega_{i,\ell}$ are the complex coefficient and cos(AoD) of the $\ell$-th MPC of the channel vector for User $i$, respectively, $L_i$ is the total number of MPCs for User $i$, ${\bf{a}}(\cdot)$ is a steering vector function defined as
\begin{equation} \label{eq_steeringVCT}
\mathbf{a}(N,\Omega)=[e^{j\pi0\Omega},e^{j\pi1\Omega},e^{j\pi2\Omega},\cdot\cdot\cdot,e^{j\pi(N-1)\Omega}]
\end{equation}
which depends on the array geometry. Let $\theta_{i,\ell}$ denote the real AoD of the $\ell$-th MPC for User $i$, then we have $\Omega_{i,\ell}=\cos(\theta_{i,\ell})$. Therefore, $\Omega_{i,\ell}$  is within the range $[-1, 1]$. For convenience and without loss of generality, in the rest of this paper, $\Omega_{i,\ell}$ is also called AoD.

For each user, the BS would perform beamforming toward the angle direction along the AoD of the strongest MPC to achieve a high array gain. In general, if there is no blockage between the BS and a user, the line-of-sight (LOS) component will be adopted for beamforming, as it has a much higher strength than the non line-of-sight (NLOS) components. If the LOS component is blocked, the strongest NLOS path would be selected for beamforming. Since the mmWave channel is spatially sparse, i.e., different MPCs have small mutual effects, we can obtain an effective channel model for the original channel model \eqref{eq_oriChannel} as
\begin{equation} \label{eq_effChannel}
\mathbf{h}_{i}=\lambda_{i}\mathbf{a}(N,\Omega_{i})
\end{equation}
where $\lambda_{i}=\lambda_{i,m_i}$ and $\Omega_{i}=\Omega_{i,m_i}$. Here $m_i$ denotes the index of the strongest MPC for User $i$. Since the effective channel model \eqref{eq_effChannel} is simpler, we adopt it in the derivation and analysis in this paper, while in the performance evaluations we also consider the original channel model in \eqref{eq_oriChannel}. Without loss of generality, we assume $|\lambda_{1}|\geq |\lambda_{2}|$, which means that the channel gain of User 1 is better.

{
\subsection{Decoding Order}
In the conventional NOMA with single-antenna BS and users, usually the information of the user with a lower channel gain is decoded first to maximize the sum rate. In contrast, in mmWave-NOMA, the decoding order depends on both channel gain and beamforming gain. Thus, there are two cases for the 2-user mmWave-NOMA system.

\emph{Case 1:} $s_{1}$ is decoded first. In such a case, User 1 directly decodes $s_1$ by treating the signal component of $s_2$ as noise. In comparison, User 2 first decodes $s_{1}$ and subtracts the signal component of $s_{1}$ from the received signal $y_{2}$; then it decodes $s_2$. Therefore, User 2 can decode $s_{2}$ without the interference from $s_{1}$.

With this decoding method, the achievable rates of User $i$ ($i=1,~2$), denoted by $R_{i}$, are represented as
\begin{equation}\label{eq_R1}
\left\{\begin{aligned}
R_{1}^{(1)}&=\log_{2}(1+ \frac{\left |\mathbf{h}_{1}^{\rm{H}}\mathbf{w} \right |^{2}p_{1}}{\left |\mathbf{h}_{1}^{\rm{H}}\mathbf{w} \right |^{2}p_{2}+\sigma^{2}}) \\
R_{2}^{(1)}&=\log_{2}(1+ \frac{\left |\mathbf{h}_{2}^{\rm{H}}\mathbf{w} \right |^{2}p_{2}}{\sigma^{2}})
\end{aligned}\right.
\end{equation}
Note that there is an implicit assumption for this result, i.e., the observed signal-to-interference plus noise ratio (SINR) at User 2 for decoding $s_1$ should be no less than that at User 1; otherwise $s_1$ cannot be correctly decoded at User 1 and the interference at User 2 (i.e., the signal component of $s_1$) cannot be completely removed. We call this implicit assumption an \emph{implicit SINR constraint}, and it is expressed as \label{eq_implicitcons1}
\begin{equation}
\frac{\left |\mathbf{h}_{2}^{\rm{H}}\mathbf{w} \right |^{2}p_{1}}{\left |\mathbf{h}_{2}^{\rm{H}}\mathbf{w} \right |^{2}p_{2}+\sigma^{2}} \geq
\frac{\left |\mathbf{h}_{1}^{\rm{H}}\mathbf{w} \right |^{2}p_{1}}{\left |\mathbf{h}_{1}^{\rm{H}}\mathbf{w} \right |^{2}p_{2}+\sigma^{2}}
\end{equation}
which is equivalent to $\left |\mathbf{h}_{2}^{\rm{H}}\mathbf{w} \right |^{2} \geq \left |\mathbf{h}_{1}^{\rm{H}}\mathbf{w} \right |^{2}$.

\emph{Case 2:} $s_2$ is decoded first. Similarly, with this decoding method, the achievable rates of User $i$ ($i=1,~2$), denoted by $R_{i}$, are represented as
\begin{equation}\label{eq_R2}
\left\{\begin{aligned}
R_{1}^{(2)}&=\log_{2}(1+ \frac{\left |\mathbf{h}_{1}^{\rm{H}}\mathbf{w} \right |^{2}p_{1}}{\sigma^{2}}) \\
R_{2}^{(2)}&=\log_{2}(1+ \frac{\left |\mathbf{h}_{2}^{\rm{H}}\mathbf{w} \right |^{2}p_{2}}{\left |\mathbf{h}_{2}^{\rm{H}}\mathbf{w} \right |^{2}p_{1}+\sigma^{2}})
\end{aligned}\right.
\end{equation}
The implicit SINR constraint is expressed as \label{eq_implicitcons2}
\begin{equation}
\frac{\left |\mathbf{h}_{1}^{\rm{H}}\mathbf{w} \right |^{2}p_{2}}{\left |\mathbf{h}_{1}^{\rm{H}}\mathbf{w} \right |^{2}p_{1}+\sigma^{2}}\geq \frac{\left |\mathbf{h}_{2}^{\rm{H}}\mathbf{w} \right |^{2}p_{2}}{\left |\mathbf{h}_{2}^{\rm{H}}\mathbf{w} \right |^{2}p_{1}+\sigma^{2}}
\end{equation}
which is equivalent to $\left |\mathbf{h}_{1}^{\rm{H}}\mathbf{w} \right |^{2} \geq \left |\mathbf{h}_{2}^{\rm{H}}\mathbf{w} \right |^{2}$.}

\subsection{Problem Formulation}
An immediate and basic problem is how to maximize the achievable sum rate of the two users provided that the channel is known \emph{a priori}. It is clear that if there are no minimal rate constraints for the two users, the achievable sum rate can be maximized by allocating all the power to User 1 and meanwhile beamforming toward User 1, whose channel gain is better. However, when there are minimal rate constraints for the two users, the power allocation intertwines with the beamforming design, which makes the problem complicated under the system setup.

{
We choose Case 2 as the decoding order; then the problem is formulated by
\begin{equation}\label{eq_problem}
\begin{aligned}
\mathop{\mathrm{Maximize}}\limits_{p_1,p_2,{\bf{w}}}~~~~ &R_{1}+R_{2}\\
\mathrm{Subject~ to}~~~ &R_{1} \geq r_{1}\\
&R_{2} \geq r_{2} \\
&\ p_{1}+p_{2}=P \\
&|[{\bf{w}}]_k|=\frac{1}{\sqrt{N}},~k=1,2,...,N\\
&\left |\mathbf{h}_{1}^{\rm{H}}\mathbf{w} \right |^{2} \geq \left |\mathbf{h}_{2}^{\rm{H}}\mathbf{w} \right |^{2}
\end{aligned}
\end{equation}
where $r_i$ denotes the minimal rate constraint for User $i$, $|[{\bf{w}}]_k|=\frac{1}{\sqrt{N}}$ is the CM constraint due to using the phase shifters in each antenna branch at the BS, $R_i=R_i^{(2)}$ as defined in \eqref{eq_R2}.

It is noteworthy that the decoding order can also be Case 1, and another problem can be formulated accordingly. The two problems with different decoding orders are similar to each other, which means that if an approach can be used to solve one of them, it can also be used to solve the other. For this reason, we adopt Case 2 as the decoding order, and propose solution to maximize the achievable sum rate. On the other hand, it is noted that since we assume that the channel gain of User 1 is better, usually Case 2 can achieve a better sum rate than Case 1. In fact, it can be proven that the optimal sum rate of Case 2 is better than that of Case 1 if $r_1=r_2$, which will be shown later.
}


\section{Solution of the Problem}
Clearly, directly solving Problem \eqref{eq_problem} by using the existing optimization tools is infeasible, because the problem is non-convex and may not be converted to a convex problem with simple manipulations. On the other hand, to directly search the optimal solution is also computationally prohibitive because the dimension is $(N+2)$, where $N$ is large in general. In this section, we propose a suboptimal solution to this problem. The basic idea is to decompose the original problem \eqref{eq_problem} into two sub-problems which are relatively easy to solve, and then we solve them one by one.

\subsection{Problem Decomposition}
Since power allocation intertwines with beamforming under the CM constraint, we first try to decompose them.
Let $c_{1}=\left|{{\bf{h}}}_{1}^{\rm{H}}\mathbf{w}\right|^{2}$ and $c_{2}=\left|{{\bf{h}}}_{2}^{\rm{H}}\mathbf{w}\right|^{2}$ denote the beam gains for User 1 and User 2, respectively. We have the following lemma.
\begin{lemma} With the ideal beamforming, the beam gains satisfy
\begin{equation}
\frac{c_{1}}{\left|\lambda_{1}\right|^{2}}+\frac{c_{2}}{\left|\lambda_{2}\right|^{2}}=N
\end{equation}
where $N$ is the number of antennas.
\end{lemma}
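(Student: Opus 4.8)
The plan is to work inside the two-dimensional subspace spanned by the users' array steering vectors and reduce the claim to a Parseval/Bessel identity. Write $\mathbf{b}_i:=\frac{1}{\sqrt{N}}\mathbf{a}(N,\Omega_i)$ for $i=1,2$, so that $\|\mathbf{b}_i\|=1$ and, by \eqref{eq_effChannel}, $\mathbf{h}_i=\sqrt{N}\,\lambda_i\,\mathbf{b}_i$. Note also that the CM constraint forces $\|\mathbf{w}\|^2=\sum_{k=1}^{N}\frac{1}{N}=1$ (and "ideal beamforming" relaxes the per-entry constraint but retains this unit-norm normalization). Then $c_i=|\mathbf{h}_i^{\rm H}\mathbf{w}|^2=N|\lambda_i|^2\,|\mathbf{b}_i^{\rm H}\mathbf{w}|^2$, so that the asserted identity is equivalent to $|\mathbf{b}_1^{\rm H}\mathbf{w}|^2+|\mathbf{b}_2^{\rm H}\mathbf{w}|^2=1$.

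Next I would pin down what "ideal beamforming" means. Since the array is large and, as argued in Section~II, the mmWave channel is spatially sparse, I treat the two normalized steering vectors as orthonormal, $\mathbf{b}_1\perp\mathbf{b}_2$, and take "ideal" to mean that the AWV wastes no energy outside the two user directions, i.e.\ $\mathbf{w}\in\mathrm{span}\{\mathbf{b}_1,\mathbf{b}_2\}$. Under these conditions $\{\mathbf{b}_1,\mathbf{b}_2\}$ is an orthonormal basis of the subspace containing $\mathbf{w}$, so Parseval's identity gives $|\mathbf{b}_1^{\rm H}\mathbf{w}|^2+|\mathbf{b}_2^{\rm H}\mathbf{w}|^2=\|\mathbf{w}\|^2=1$, which is exactly what is needed. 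Concretely, I would parametrize $\mathbf{w}=\sqrt{\eta_1}\,\mathbf{b}_1+\sqrt{\eta_2}\,e^{j\psi}\,\mathbf{b}_2$ with $\eta_1,\eta_2\ge 0$ and $\psi\in[0,2\pi)$; orthonormality then yields $\|\mathbf{w}\|^2=\eta_1+\eta_2=1$ and $|\mathbf{b}_i^{\rm H}\mathbf{w}|^2=\eta_i$, hence $c_i=N|\lambda_i|^2\eta_i$ and $\frac{c_1}{|\lambda_1|^2}+\frac{c_2}{|\lambda_2|^2}=N(\eta_1+\eta_2)=N$. This parametrization is also the natural bridge to the decomposition, since $(\eta_1,\eta_2)$ is precisely the beam-gain split sub-problem~1 must choose; more generally, dropping the span assumption, Bessel's inequality gives $\frac{c_1}{|\lambda_1|^2}+\frac{c_2}{|\lambda_2|^2}\le N$, with equality exactly for ideal beamforming, which explains why the quantity acts as a conserved "beam-gain budget."

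The only step that is not a pure one-liner is the orthogonality $\mathbf{b}_1\perp\mathbf{b}_2$: for finite $N$ and generic $\Omega_1\ne\Omega_2$ the inner product $\mathbf{b}_1^{\rm H}\mathbf{b}_2=\frac{1}{N}\sum_{k=0}^{N-1}e^{j\pi k(\Omega_2-\Omega_1)}$ is a normalized Dirichlet-kernel term that is small but nonzero, so the identity is exact only in the idealized large-array regime (or when $\Omega_2-\Omega_1$ is an integer multiple of $2/N$, in which case the two steering vectors are genuinely orthogonal). I would state this caveat explicitly, and if exactness for all $N$ is desired, either restrict the AoD separation accordingly or phrase the lemma as an approximation; all the remaining steps are routine algebra.
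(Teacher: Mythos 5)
Your proof is correct and reaches the paper's identity, but it formalizes the idealization differently from the paper, so a comparison is worth making. The paper's proof is a \emph{continuous-angle} Parseval argument: it first establishes the exact identity $\frac{1}{2}\int_{-1}^{1}|\mathbf{a}(N,\Omega)^{\rm H}\mathbf{w}|^2\,d\Omega=\|\mathbf{w}\|^2$ by expanding the double sum and integrating out the cross terms, and then invokes ``ideal beamforming'' as a statement about the \emph{beam pattern}: no sidelobes, and a flat lobe of width $2/N$ around each user direction, so that the integral collapses to $\frac{1}{N}\bigl(\frac{c_1}{|\lambda_1|^2}+\frac{c_2}{|\lambda_2|^2}\bigr)$. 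You instead work in the finite-dimensional vector space, take ``ideal'' to mean that the normalized steering vectors $\mathbf{b}_1,\mathbf{b}_2$ are orthonormal and $\mathbf{w}\in\mathrm{span}\{\mathbf{b}_1,\mathbf{b}_2\}$, and apply discrete Parseval; since $|\mathbf{b}_i^{\rm H}\mathbf{w}|^2=\frac{1}{N}|\mathbf{a}_i^{\rm H}\mathbf{w}|^2=\frac{c_i}{N|\lambda_i|^2}$, the two intermediate identities coincide exactly. The two idealizations are physically the same assumption (no energy radiated outside the two user directions) and are of comparable strength, so neither proof is more rigorous than the other; but your route buys two things the paper's does not make explicit: (i) Bessel's inequality immediately gives $\frac{c_1}{|\lambda_1|^2}+\frac{c_2}{|\lambda_2|^2}\le N$ for \emph{any} unit-norm $\mathbf{w}$ (up to the small Dirichlet cross term $\mathbf{b}_1^{\rm H}\mathbf{b}_2$), which cleanly justifies the paper's later claim that the solution of the relaxed sub-problem is an upper bound on the original problem; and (ii) you localize the error of the idealization in a single quantifiable term, the normalized Dirichlet kernel at $\Omega_2-\Omega_1$, rather than in a qualitative ``no sidelobe'' assumption. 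What the paper's version buys in exchange is the beam-domain picture (flat lobes of width $2/N$) that drives the intuition in the rest of Section III and the simulations. Your caveat about exact orthogonality only at AoD separations that are multiples of $2/N$ is accurate and worth keeping.
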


\begin{proof}
With the ideal beamforming, there is no side lobe, i.e., the beam gains along the directions of the two users are significant, while the beam gains along the other directions are zeros. Moreover, the beam pattern for each user is flat with a beam width of $2/N$, which is the same as that of an arbitrary steering vector shown in \eqref{eq_steeringVCT} \cite{TseFundaWC,xiao2016codebook}. Under such an ideal condition, the average power of an ideal beamforming vector $\bf{w}$ in the beam domain is
\begin{equation}
\begin{aligned}
&\frac{1}{2}\mathlarger{\int}_{-1}^{1}\left|\mathbf{a}(N,\Omega)^{\rm{H}}\mathbf{w}\right|^{2}d\Omega\\
=&\frac{1}{N}(\left|\mathbf{a}(N,\Omega_{1})^{\rm{H}}\mathbf{w}\right|^{2}+\left|\mathbf{a}(N,\Omega_{2})^{\rm{H}}\mathbf{w}\right|^{2})\\
=&\frac{1}{N}(\frac{c_{1}}{\left|\lambda_{1}\right|^{2}}+\frac{c_{2}}{\left|\lambda_{2}\right|^{2}})
\end{aligned}
\end{equation}

On the other hand, for an arbitrary AWV ${\bf{w}}$, we have
\begin{equation}
\begin{aligned}
&\frac{1}{2}\int_{ - 1}^1 {|{\bf{a}}{{(N,\Omega )}^{\rm{H}}}{\bf{w}}{|^2}d\Omega } \\
 =& \frac{1}{2}\int_{ - 1}^1 {\sum\limits_{m = 1}^N {{{[{\bf{w}}]}_m}{e^{ - j\pi (m - 1)\Omega }}\sum\limits_{n = 1}^N {[{\bf{w}}]_n^*{e^{j\pi (n - 1)\Omega }}} } d\Omega } \\
 =& \frac{1}{2}\int_{ - 1}^1 {\sum\limits_{m = 1}^N {\sum\limits_{n = 1}^N {{{[{\bf{w}}]}_m}[{\bf{w}}]_n^*{e^{ - j\pi (m - 1)\Omega }}{e^{j\pi (n - 1)\Omega }}} } d\Omega } \\
 =&\sum\limits_{m = 1}^N {{{[{\bf{w}}]}_m}[{\bf{w}}]_m^*}  + \\
 &~~~~\frac{1}{2}\sum\limits_{m = 1}^N {\sum\limits_{n = 1,~n \ne m}^N {{{[{\bf{w}}]}_m}[{\bf{w}}]_n^*\int_{ - 1}^1 {{e^{j\pi (n - m)\Omega }}d\Omega } } }\\
 =&\|{\bf{w}}\|^2
\end{aligned}
\end{equation}

Under the CM constraint, we have $\|{\bf{w}}\|^2=1$. Thus
\begin{equation}
\begin{aligned}
&\frac{1}{2}\mathlarger{\int}_{-1}^{1}\left|\mathbf{a}(N,\Omega)^{\rm{H}}\mathbf{w}\right|^{2}d\Omega=\frac{1}{N}(\frac{c_{1}}{\left|\lambda_{1}\right|^{2}}+\frac{c_{2}}{\left|\lambda_{2}\right|^{2}})=1\\
\Rightarrow&\frac{c_{1}}{\left|\lambda_{1}\right|^{2}}+\frac{c_{2}}{\left|\lambda_{2}\right|^{2}}=N
\end{aligned}
\end{equation}

\end{proof}

Based on Lemma 1, Problem \eqref{eq_problem} can be re-described as
\begin{equation} \label{eq_problem_sub1}
\begin{aligned}
\mathop{\mathrm{Maximize}}\limits_{p_1,p_2,c_1,c_2}~~~~&R_1+R_2\\
\mathrm{Subject~to} ~~~ &R_1 \geq r_{1}\\
&R_2 \geq r_{2} \\
&p_{1}+p_{2}=P\\
&\frac{c_{1}}{\left|\lambda_{1}\right|^{2}}+\frac{c_{2}}{\left|\lambda_{2}\right|^{2}}=N\\
&c_1 \geq c_2
\end{aligned}
\end{equation}
where $|{\bf{h}}_i^{\rm{H}}{\bf{w}}|^2$ is replaced by the beam gain $c_i$ ($i=1,2$). The CM constraint is not involved in the above problem \eqref{eq_problem_sub1}, but will be considered in the following beamforming sub-problem.


{
\begin{lemma} If the users share an identical minimal rate constraint, i.e. $r_1=r_2$, the optimal decoding order at the user side is the order of increasing $|\lambda_i|$. In particular, decoding User 2 first is the optimal order in our model.
\end{lemma}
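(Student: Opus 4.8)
The plan is to establish the equivalent claim flagged earlier in the text: that under $r_1=r_2=:r$ the optimal sum rate of the Case-2 problem \eqref{eq_problem} is no smaller than that of the Case-1 problem---the same data, but with $R_i^{(1)}$ of \eqref{eq_R1} replacing $R_i^{(2)}$ of \eqref{eq_R2} and the implicit constraint reversed to $|\mathbf{h}_2^{\rm{H}}\mathbf{w}|^2\ge|\mathbf{h}_1^{\rm{H}}\mathbf{w}|^2$. Write $V_2$ and $V_1$ for these two optimal values. Since we assume $|\lambda_1|\ge|\lambda_2|$, $V_2\ge V_1$ says the sum-rate-optimal order decodes first the message of the user with the smaller $|\lambda_i|$, i.e.\ the increasing-$|\lambda_i|$ order, which in the two-user model is exactly ``decode $s_2$ first''. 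The $K$-user version stated in the lemma should then follow by applying the same two-user comparison to any two users adjacent in a decoding order, since transposing them into increasing-$|\lambda_i|$ order cannot lower the sum rate.

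I would work in the beam-gain reformulation \eqref{eq_problem_sub1}, where by Lemma 1 the only coupling between the users is $c_1/|\lambda_1|^2+c_2/|\lambda_2|^2=N$, with the extra constraint $c_2\ge c_1$ in the Case-1 problem and $c_1\ge c_2$ in the Case-2 problem. One preliminary observation is used repeatedly: eliminating $c_2$ from $c_2\ge c_1$ via Lemma 1 shows every Case-1-feasible point satisfies $c_1\le c^\star:=N|\lambda_1|^2|\lambda_2|^2/(|\lambda_1|^2+|\lambda_2|^2)$, and symmetrically every Case-2-feasible point satisfies $c_1\ge c^\star$.

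The core step is an explicit map sending a Case-1-feasible point $(p_1,p_2,c_1,c_2)$ to a Case-2-feasible point of no smaller sum rate: set $\tilde p_1=p_2$, $\tilde p_2=p_1$, $\tilde c_2=c_1$, and $\tilde c_1=|\lambda_1|^2(N-c_1/|\lambda_2|^2)$, the value forced by Lemma 1. The power budget and the Lemma-1 identity hold by construction; using $c_2=|\lambda_2|^2(N-c_1/|\lambda_1|^2)$ for the original point, both $\tilde c_1\ge0$ and the implicit constraint $\tilde c_1\ge\tilde c_2$ reduce to the bound $c_1\le c^\star$ just noted. For the rates, substituting the swapped powers and $\tilde c_2=c_1$ makes $R_2^{(2)}$ at the new point coincide with $R_1^{(1)}$ at the old point, while $R_1^{(2)}$ at the new point is $\log_2(1+\tilde c_1 p_2/\sigma^2)\ge\log_2(1+c_2 p_2/\sigma^2)=R_2^{(1)}$ at the old point (once more because $\tilde c_1\ge c_2\Leftrightarrow c_1\le c^\star$). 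Thus the new sum rate is at least $R_1^{(1)}+R_2^{(1)}$, and both new minimal-rate constraints hold because $R_2^{(2)}=R_1^{(1)}\ge r_1$ and $R_1^{(2)}\ge R_2^{(1)}\ge r_2$. This last line is exactly where $r_1=r_2$ is used: each new constraint is met only through the \emph{other} user's old constraint. Feeding an optimal Case-1 point through the map gives $V_2\ge V_1$.

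I expect the real difficulty to be discovering this exchange rather than verifying it. The natural first guesses---swapping only the powers, swapping only the beam gains, or reflecting $(c_1,c_2)$ through the balance point $(c^\star,c^\star)$---each fail, because one of the two rate terms then moves the wrong way; the cure is to swap the powers \emph{and} impose $\tilde c_2=c_1$, which is what makes $R_2^{(2)}$ collapse onto $R_1^{(1)}$ and turns $\tilde c_1\ge c_2$ into the structural inequality $c_1\le c^\star$. The only further care concerns degenerate configurations ($|\lambda_1|=|\lambda_2|$, $r=0$, or boundary points with a vanishing $p_i$ or $c_i$), where some of these inequalities become equalities; they are routine to check and cause no trouble.
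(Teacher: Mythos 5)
Your proof is correct and follows essentially the same route as the paper's: swap the two users' powers, carry one beam gain across exactly, let Lemma 1 force the other, and reduce everything to the single structural inequality $c_1\le N|\lambda_1|^2|\lambda_2|^2/(|\lambda_1|^2+|\lambda_2|^2)$, with $r_1=r_2$ used only to re-validate the two minimal-rate constraints after the swap. The only (immaterial) difference is the mirror-image choice of which gain is transferred exactly: you set $\tilde c_2=c_1$ and show the forced $\tilde c_1\ge c_2$, whereas the paper sets the new $c_1$ equal to the old $c_2^{\star}$ and shows the forced new $c_2$ is at least $c_1^{\star}$.
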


\begin{proof}
With Case 1, the achievable rates of User $i~(i=1,2)$ are represented as
\begin{equation}\label{decoding_1}
\left\{\begin{aligned}
R_{1}^{(1)}&=\log_{2}(1+ \frac{c_{1}p_{1}}{c_{1}p_{2}+\sigma^{2}}) \\
R_{2}^{(1)}&=\log_{2}(1+ \frac{c_{2}p_{2}}{\sigma^{2}})
\end{aligned}\right.
\end{equation}
where $c_1 \leq c_2$. The other constraints are the same as those of Case 2, as shown in \eqref{eq_problem_sub1}.

With Case 2, the achievable rates of User $i~(i=1,2)$ are represented as
\begin{equation}\label{decoding_2}
\left\{\begin{aligned}
R_{1}^{(2)}&=\log_{2}(1+ \frac{c_{1}p_{1}}{\sigma^{2}}) \\
R_{2}^{(2)}&=\log_{2}(1+ \frac{c_{2}p_{2}}{c_{2}p_{1}+\sigma^{2}})
\end{aligned}\right.
\end{equation}
where $c_1\geq c_2$.

Assume that the optimal solution for Case 1 is $\{c_{1}^{\star}, c_{2}^{\star}, p_{1}^{\star}, p_{2}^{\star}\}$. We have $c_1^ \star  \le c_2^ \star$, and $\frac{{c_1^ \star }}{{|{\lambda _1}{|^2}}} + \frac{{c_2^ \star }}{{|{\lambda _2}{|^2}}} = N$ (Lemma 1). Hence, we can obtain

\begin{equation}
\begin{aligned}
&\left( {N - \frac{{c_2^ * }}{{|{\lambda _2}{|^2}}}} \right)|{\lambda _1}{|^2} \le c_2^ * \\
 \Rightarrow &N \le \frac{{c_2^ * }}{{|{\lambda _2}{|^2}}} + \frac{{c_2^ * }}{{|{\lambda _1}{|^2}}} \Rightarrow N \le \frac{{|{\lambda _1}{|^2} + |{\lambda _2}{|^2}}}{{|{\lambda _1}{|^2}|{\lambda _2}{|^2}}}c_2^ * \\
 \Rightarrow &N\left( {|{\lambda _1}{|^2} - |{\lambda _2}{|^2}} \right) \le \frac{{|{\lambda _1}{|^4} - |{\lambda _2}{|^4}}}{{|{\lambda _1}{|^2}|{\lambda _2}{|^2}}}c_2^ * \\
 \Rightarrow &\left( {N - \frac{{c_2^ * }}{{|{\lambda _2}{|^2}}}} \right)|{\lambda _1}{|^2} \le \left( {N - \frac{{c_2^ * }}{{|{\lambda _1}{|^2}}}} \right)|{\lambda _2}{|^2}
 \end{aligned}
\end{equation}

For Case 2, we choose $\{c_{1}=c_{2}^{\star},~p_{1}=p_{2}^{\star},~p_{2}=p_{1}^{\star}\}$. Thus we have

\begin{equation}
\begin{aligned}
c_2&=(N-\frac{c_{1}}{\left|\lambda_{1}\right|^{2}})\left|\lambda_{2}\right|^{2}=(N-\frac{c_{2}^{\star}}{\left|\lambda_{1}\right|^{2}})\left|\lambda_{2}\right|^{2}\\
   &\geq (N-\frac{c_{2}^{\star}}{\left|\lambda_{2}\right|^{2}})\left|\lambda_{1}\right|^{2}=c_{1}^{\star}
\end{aligned}
\end{equation}

Further on, we have
\begin{equation}
\left\{\begin{aligned}
R_{1}^{(2)}&=\log_{2}(1+ \frac{c_{2}^{\star}p_{2}^{\star}}{\sigma^{2}})=R_{2}^{(1)}\\
R_{2}^{(2)}&\geq \log_{2}(1+ \frac{c_{1}^{\star}p_{1}^{\star}}{c_{1}^{\star}p_{2}^{\star}+\sigma^{2}})=R_{1}^{(1)}
\end{aligned}\right.
\end{equation}
So $R_{1}^{(2)}+R_{2}^{(2)}\geq R_{1}^{(1)}+R_{2}^{(1)}$, i.e., decoding User 2 first is optimal if the users share an identical minimal rate constraint.
\end{proof}}

Next, we formulate the beamforming problem, i.e., to design $\bf{w}$ such that $|{\bf{h}}_i^{\rm{H}}{\bf{w}}|^2=c_i$ ($i=1,2$) under the CM constraint. However, under strict constraints $|{\bf{h}}_i^{\rm{H}}{\bf{w}}|^2=c_i$ and the CM constraint, an appropriate weighting vector $\bf{w}$ may not be found in most cases, and the problem is still difficult to solve due to the equality constraints. Therefore, we relax the equality constraints $|{\bf{h}}_i^{\rm{H}}{\bf{w}}|^2=c_i$ with inequality constraints $|{\bf{h}}_i^{\rm{H}}{\bf{w}}|^2\geq c_i$, and formulate the beamforming problem as follows:
\begin{equation}\label{eq_problem_sub2}
\begin{aligned}
\mathop{\mathrm{Minimize}}\limits_{{\bf{w}}}~~~~&\alpha\\
\mathrm{Subject~to} ~~~ &[\mathbf{w}]_{i}[\mathbf{w}]_{i}^{*}\leq\alpha; \ i=1,2,\cdots,N \\
           ~~~   &\left|\mathbf{h}_{1}^{\rm{H}}\mathbf{w}\right|\geq\sqrt {c_{1}} \\
           & \left|\mathbf{h}_{2}^{\rm{H}}\mathbf{w}\right|\geq\sqrt {c_{2}}
\end{aligned}
\end{equation}
where the CM constraint $|{\bf{w}}|=\frac{{\bf{1}}}{\sqrt{N}}$ is replaced by $[\mathbf{w}]_{i}[\mathbf{w}]_{i}^{*}\leq\alpha$, $\ i=1,2,\cdots,N$, and we want to minimize $\alpha$, such that the 2-norm of $\bf{w}$ is minimal.

Note that although Problem \eqref{eq_problem_sub2} is relatively easy to solve, the CM constraint is not necessarily satisfied by solving \eqref{eq_problem_sub2}. In fact, the target behind Problem \eqref{eq_problem_sub2} is to \emph{try to let the power of each antenna be the same while satisfying the gain constraints.} Hence, after solving Problem \eqref{eq_problem_sub2}, we need to normalize ${\bf{w}}$ to satisfy the CM constraint with the phases of its elements unchanged.

{
Finally, considering that the normalized ${\bf{w}}$ may not satisfy the strict gain constraint $|{\bf{h}}_i^{\rm{H}}{\bf{w}}|^2=c_i$ ($i=1,2$), we need to substitute it into the original problem, i.e., \eqref{eq_problem}, to reset the user powers. Then we obtain the final solution.
}

With the above manipulations, Problem \eqref{eq_problem} is decomposed into Problems \eqref{eq_problem_sub1} and \eqref{eq_problem_sub2}, which are independent power and beam gain allocation and beamforming sub-problems. Problem \eqref{eq_problem_sub1} is a relaxation of the original problem, since the CM beamforming is bypassed, which enlarges the feasible region. Problem \eqref{eq_problem_sub2} addresses the beamforming issue, but it is also a relaxation of the CM beamforming problem. Although the original problem is hard to solve, the two sub-problems are relatively easy to solve. Next, we will first solve Problem \eqref{eq_problem_sub1}, and obtain the optimal solution $\{c_{1}^{\star},~c_{2}^{\star},~p_{1}^{\star},~p_{2}^{\star}\}$. The optimal values of $c_{1}$ and $c_{2}$ are used as gain constraints in Problem \eqref{eq_problem_sub2}. Then we solve Problem \eqref{eq_problem_sub2} and obtain an appropriate ${\bf{w}}$. Afterwards, we normalize ${\bf{w}}$ to satisfy the CM constraint. Finally, substituting ${\bf{w}}$ into problem \eqref{eq_problem}, we obtain the final solution of power allocation $\{p_{1}, p_{2}\}$. Although the obtained $\{p_{1}, p_{2}, {\bf{w}}\}$ are not globally optimal, the achieved sum rate performance is close to the performance bound, as it will be shown later in Section V.

\subsection{Solution of the Power and Beam Gain Allocation Sub-Problem}

Note that based on Lemma 2, the optimal decoding is confirmed, so the expression of achievable rates are exclusive, i.e., $R_1=R_{1}^{(2)}, R_2=R_{2}^{(2)}$. The implicit SINR constraint $\frac{c_{1}p_{2}}{c_{1}p_{1}+\sigma^{2}}\geq\frac{c_{2}p_{2}}{c_{2}p_{1}+\sigma^{2}}$ is equivalent to $c_1\geq c_2$. In addition, there are two equality constraints in Problem \eqref{eq_problem_sub1}, namely $p_{1}+p_{2}=P$ and
$\frac{c_{1}}{\left|\lambda_{1}\right|^{2}}+\frac{c_{2}}{\left|\lambda_{2}\right|^{2}}=N$. Thus, we have $p_2=P-p_1$ and $c_2=(N-\frac{c_{1}}{\left|\lambda_{1}\right|^{2}})\left|\lambda_{2}\right|^{2}$. Substituting them into Problem \eqref{eq_problem_sub1}, we obtain
\begin{equation}\label{eq_problem_sub1_1}
\begin{aligned}
\mathop{\mathrm{Maximize}}\limits_{p_1,c_1}~~~ &f(c_{1},p_{1})=R_{1}+R_{2}\\
\mathrm{Subject~to} ~~~ &R_{1} \geq r_{1}\\
&R_{2} \geq r_{2} \\
&c_1\geq c_2
\end{aligned}
\end{equation}
where $R_1$ and $R_2$ become
\begin{equation}
\left\{\begin{aligned}
R_{1}&=\log_{2}(1+ \frac{c_{1}p_{1}}{\sigma^{2}}) \\
R_{2}&=\log_{2}\left(1+ \frac{(\left|\lambda_{2}\right|^{2}N-\frac{\left|\lambda_{2}\right|^{2}}{\left|\lambda_{1}\right|^{2}}c_{1})(P-p_{1})}{(\left|\lambda_{2}\right|^{2}N-\frac{\left|\lambda_{2}\right|^{2}}{\left|\lambda_{1}\right|^{2}}c_{1})p_{1}+\sigma^{2}}\right)
\end{aligned}\right.
\end{equation}

The objective function $f(c_{1},p_{1})$ with two variables $c_{1}$ and $p_{1}$ is given by

\begin{equation}
\begin{aligned}
f(c_{1},p_{1})=&\log_{2}(1+ \frac{c_{1}p_{1}}{\sigma^{2}})+\\
&\log_{2}\left(1+ \frac{(\left|\lambda_{2}\right|^{2}N-\frac{\left|\lambda_{2}\right|^{2}}{\left|\lambda_{1}\right|^{2}}c_{1})(P-p_{1})}{(\left|\lambda_{2}\right|^{2}N-\frac{\left|\lambda_{2}\right|^{2}}{\left|\lambda_{1}\right|^{2}}c_{1})p_{1}+\sigma^{2}}\right)
\end{aligned}
\end{equation}

{It is known that the maximum point of a continuous function in a bounded closed set is either an extreme point or located in the boundary.} Thus, to solve Problem \eqref{eq_problem_sub1_1}, we first obtain the extreme points of the objective function. If the three inequality constraints are satisfied at any one of these extreme points, it may be an optimal solution. Otherwise the optimal solution should be within the boundary defined by the three inequality constraints. Hence, we start from obtaining the stationary points of the objective function. Note that a stationary point may not be necessarily an extreme point; it may be a saddle point.

Let the gradient of $f(c_{1},p_{1})$ be zero, i.e.,
\begin{equation}
  (\frac{\partial f}{\partial c_{1}},\frac{\partial f}{\partial p_{1}})=(0,0)
\end{equation}
Then, we obtain one stationary point $(c_{1m},p_{1m})$:
\begin{equation} \label{eq_stationaryPoint}
\left\{\begin{aligned}
c_{1m}&=\frac{\left|\lambda_{1}\right|^{2}\left|\lambda_{2}\right|^{2}N}{\left|\lambda_{1}\right|^{2}+\left|\lambda_{2}\right|^{2}}\\
  p_{1m}&= \frac{\left|\lambda_{2}\right|^{2}(\left|\lambda_{1}\right|^{2}+\left|\lambda_{2}\right|^{2})P\sigma^{2}}{\left|\lambda_{1}\right|^{4}\left|\lambda_{2}\right|^{2}
  NP+(\left|\lambda_{1}\right|^{2}+\left|\lambda_{2}\right|^{2})^{2}\sigma^{2}}
\end{aligned}\right.
\end{equation}

After some derivations the objective function at this stationary point writes
\begin{equation} \label{eq_stationaryValue}
  f(c_{1m},p_{1m})=\log_{2}\left(1+\frac{\left|\lambda_{1}\right|^{2}\left|\lambda_{2}\right|^{2}NP}{(\left|\lambda_{1}\right|^{2}+\left|\lambda_{2}\right|^{2})\sigma^{2}}\right)
\end{equation}

{At this moment, we do not know whether this stationary point is an extreme point or just a saddle point. Thus, we can examine the values of the functions at some other points. If there are values at the chosen points larger than $f(c_{1m},p_{1m})$ and there are also values smaller than $f(c_{1m},p_{1m})$, then this stationary point must be a saddle point. For simplicity, we may choose the intersection points of the boundaries, i.e, the maximum value and minimum value of $c_1, p_1$, respectively, namely $(0,0)$, $(N|\lambda_1|^2,P)$, $(N|\lambda_1|^2,0)$ and $(0,P)$.}

The values of the objective function at point $(0,0)$ and $(N,P)$ are
\begin{equation}
  f(0,0)=\log_{2}(1+\frac{\left|\lambda_{2}\right|^{2}NP}{\sigma^{2}})
\end{equation}
\begin{equation}
  f(N|\lambda_1|^2,P)=\log_{2}(1+\frac{\left|\lambda_{1}\right|^{2}NP}{\sigma^{2}})
\end{equation}
and the values of the objective function at point $(N,0)$ and $(0,P)$ are
\begin{equation}
  f(N|\lambda_1|^2,0)=0
\end{equation}
\begin{equation}
  f(0,P)=0
\end{equation}

Clearly, $f(N,0)$ and $f(0,P)$ are smaller than $f(c_{1m},p_{1m})$, while $f(0,0)$ and $f(N,P)$ are greater than $f(c_{1m},p_{1m})$. Hence, the point $(c_{1m},p_{1m})$ is just a saddle point instead of an optimum of the objective function. As a result, the objective function does not have an extreme point. Hence, the optimal solution of Problem \eqref{eq_problem_sub1_1} locates within the boundary of the feasible region defined by one of the inequality constraints. As there are three inequality constraints in Problem \eqref{eq_problem_sub1_1}, the feasible region is enclosed by three boundaries corresponding to the three inequality constraints, respectively, as illustrated in Fig. \ref{feasible_region}. The optimal solution may locate within any one of them. Thus, we discuss the possible cases here.

\begin{figure}[t]
\begin{center}
  \includegraphics[width=7 cm]{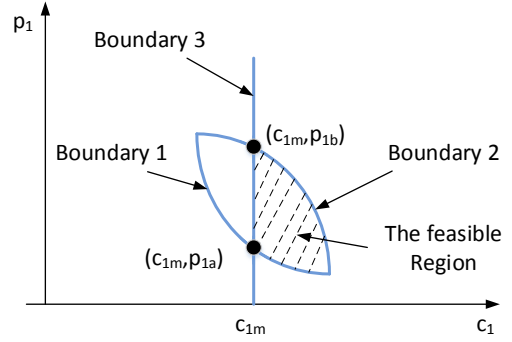}
  \caption{Illustration of the feasible region of Problem \eqref{eq_problem_sub1_1}.}
  \label{feasible_region}
\end{center}
\end{figure}

\emph{Case 1:} If the optimal solution is within the boundary $c_1=c_2$ (Boundary 3 in Fig. \ref{feasible_region}), we have $c_1=c_{1m}$ as shown in \eqref{eq_stationaryPoint}, and in such a case the objective function $f(c_1,p_1)$ does not depend on $p_1$, i.e., $f(c_1,p_1)=f(c_{1m},p_{1m})$ as shown in \eqref{eq_stationaryValue}, provided that $(c_1,p_1)$ satisfies the two rate constraints.

\emph{Case 2:} If the optimal solution is within the boundary $R_{1}=r_{1}$ (Boundary 1 in Fig. \ref{feasible_region}), we have $p_{1}= \frac{(2^{r_{1}}-1)\sigma^{2}}{c_{1}}$. Substituting it into the objective function $f(c_1,p_1)$, we find that the objective function now has only one variable $c_1$. By letting the derivative of the objective function with respect to $c_1$ equal zero, and solving the equation, we can obtain two roots, a positive one and a negative one. Clearly the negative one does not satisfy the condition that the beam gain $c_1$ is positive. Thus, the obtained positive root is the optimal value of $c_1$. Then we achieve the following optimal solution:
\begin{equation}
\left\{\begin{aligned}
c_{1,1}&=\frac{\left|\lambda_{1}\right|^{2}\big{[}(2^{1+r_{1}}-2)\left|\lambda_{2}\right|^{2}NP-2\sqrt{G}\big{]}}{2P\big{[}
(2^{r_{1}}-1)\left|\lambda_{2}\right|^{2}-\left|\lambda_{1}\right|^{2}\big{]}}\\
  p_{1,1}&= \frac{(2^{r_{1}}-1)\sigma^{2}}{c_{1,1}}
\end{aligned}\right.
\end{equation}
where
\begin{equation}
\begin{aligned}
G&=(2^{r_{1}}-1)\left|\lambda_{1}\right|^{2}\left|\lambda_{2}\right|^{2}N^{2}P^{2}+\\
&\big{[}(2^{r_{1}}-1)\left|\lambda_{1}\right|^{2}+(2^{1+r_{1}}-2^{2r_{1}}-1)\left|\lambda_{2}\right|^{2}
\big{]}NP\sigma^{2}.
\end{aligned}
\end{equation}

It is noteworthy that Case 2 implicitly requires $c_{1,1}\geq c_{1m}$, such that $f(c_{1,1},p_{1,1})\geq f(c_{1m},p_{1a})=f(c_{1m},p_{1m})$, where $(c_{1m},p_{1a})$ is the intersection point of Boundary 3 and Boundary 1 as shown in Fig. \ref{feasible_region}. Otherwise if $c_{1,1}< c_{1m}$, in the region of $c_1\geq c_{1m}$ within Boundary 1, $f(c_1,p_1)$ is monotonically descending as $c_1$. In such a case, $f(c_{1m},p_{1m})=f(c_{1m},p_{1a})\geq f(c_{1},p_{1})$, which means that the optimal solution locates within Boundary 3 instead of Boundary 1.

\emph{Case 3:} If the optimal solution is within the boundary $R_{2}=r_{2}$ (Boundary 2 in Fig. \ref{feasible_region}), analogously, we can obtain the following optimal solution:
\begin{equation}
\left\{\begin{aligned}
  c_{1,2}&=\left|\lambda_{1}\right|^{2}\left(N-\frac{\sqrt{(2^{r_{2}}-1)NP\sigma^{2}}}{\left|\lambda_{2}\right|P}\right)\\
  p_{1,2}&= \frac{(N\left|\lambda_{1}\right|^{2}-c_{1})\left|\lambda_{2}\right|^{2}P-(2^{r_{2}}-1)\left|\lambda_{1}\right|^{2}\sigma^{2}}{2^{r_{2}}\left|\lambda_{2}\right|^{2}
  (N\left|\lambda_{1}\right|^{2}-c_{1})}
\end{aligned}\right.
\end{equation}

Similarly, Case 3 implicitly requires $c_{1,2}\geq c_{1m}$; otherwise the optimal solution locates within Boundary 3 instead of Boundary 2.

%

In summary, there are four cases in total to determine the optimal solution $(c_1^\star,p_1^\star)$:

(i) If $c_{1,1}<c_{1m}$ and $c_{1,2}<c_{1m}$, $f^\star(c_1,p_1)=f(c_{1m},p_{0})=f(c_{1m},p_{1m})$, and $(c_1^\star,p_1^\star)=({c_{1m},p_{0}})$, where $p_{0}$ can be any value satisfying the two rate constraints.

(ii) If $c_{1,1}\geq c_{1m}$ and $c_{1,2}<c_{1m}$, $f^\star(c_1,p_1)=f(c_{1,1},p_{1,1})$, and $(c_1^\star,p_1^\star)=({c_{1,1},p_{1,1}})$.

(iii) If $c_{1,1}< c_{1m}$ and $c_{1,2}\geq c_{1m}$, $f^\star(c_1,p_1)=f(c_{1,2},p_{1,2})$, and $(c_1^\star,p_1^\star)=({c_{1,2},p_{1,2}})$.

(iv) If $c_{1,1}>c_{1m}$ and $c_{1,2}>c_{1m}$,
\begin{equation}
f^\star(c_1,p_1)=\max([f(c_{1,1},p_{1,1}),f(c_{1,2},p_{1,2})])
\end{equation}
The corresponding optimal solution is
$(c_1^\star,p_1^\star)=({c_{1,1},p_{1,1}})$ if $f(c_{1,1},p_{1,1})\geq f(c_{1,2},p_{1,2})$, or $(c_1^\star,p_1^\star)=({c_{1,2},p_{1,2}})$ if $f(c_{1,1},p_{1,1})< f(c_{1,2},p_{1,2})$.

The other two parameters are $p_2^\star=P-p_1^\star$ and $c_2^\star=(N-\frac{c_{1}^\star}{\left|\lambda_{1}\right|^{2}})\left|\lambda_{2}\right|^{2}$.

{Hereto, we have solved the power and beam gain allocation sub-problem, i.e., we have found the optimal solution of Problem \eqref{eq_problem_sub1} and obtained $\{c_{1}^\star,c_{2}^\star,p_1^\star,p_2^\star\}$ under the assumption of ideal beamforming, i.e., we assume Lemma 1 holds. However, $\{c_{1}^\star,c_{2}^\star,p_1^\star,p_2^\star\}$ may not be an optimal solution of the original problem, i.e., Problem \eqref{eq_problem}, because a beamforming vector with beam gains $\{c_{1}^\star,c_{2}^\star\}$ may not be found under the CM constraint. Hence, we say the optimal achievable sum rate of Problem \eqref{eq_problem_sub1} is an \emph{upper bound} of that of the original problem.}

\subsection{Solution of the Beamforming Sub-Problem}
In this section, we solve the beamforming sub-problem, i.e., we solve Problem \eqref{eq_problem_sub2} to design an appropriate ${\bf{w}}$ to realize the user beam gains $c_{1}^\star$ and $c_{2}^\star$.

Define $b_{1}=\frac{c_{1}^\star}{\left|\lambda_{1}\right|^{2}},b_{2}=\frac{c_{2}^\star}{\left|\lambda_{2}\right|^{2}}$. Problem \eqref{eq_problem_sub2} can be rewritten to
\begin{equation} \label{eq_problem_sub2_1}
\begin{aligned}
\mathop{\mathrm{Minimize}}\limits_{{\bf{w}}}~~~~&\alpha\\
\mathrm{Subject~to} ~~~ &[\mathbf{w}]_{i}[\mathbf{w}]_{i}^{*}\leq\alpha; \ i=1,2,\cdots,N \\
           ~~~   &\left|\mathbf{a}_{1}^{\rm{H}}\mathbf{w}\right|\geq\sqrt {b_{1}} \\
           & \left|\mathbf{a}_{2}^{\rm{H}}\mathbf{w}\right|\geq\sqrt {b_{2}}
\end{aligned}
\end{equation}
where ${\mathbf{a}}_{i}\triangleq \mathbf{a}(N,\Omega_i)$ for $i=1,2$.

It is clear that an arbitrary phase rotation can be added to the vector $\mathbf{w}$ in Problem \eqref{eq_problem_sub2_1} without affecting the beam gains. Thus, if $\mathbf{w}$ is optimal, so is $\mathbf{w}e^{j\phi}$, where $\phi$ is an arbitrary phase within $[0,2\pi)$. Without loss of generality, we may then choose $\phi$ so that ${\bf{h}}_{1}^{\rm{H}}\mathbf{w}$ is real. Problem \eqref{eq_problem_sub2_1} is tantamount to
\begin{gather} \label{eq_problem_sub2_1a}
\begin{aligned}
\mathop{\mathrm{Minimize}}\limits_{{\bf{w}}}~~~~&\alpha\\
\mathrm{Subject~to} ~~~ &[\mathbf{w}]_{i}[\mathbf{w}]_{i}^{*}\leq\alpha; \ i=1,2,\cdots,N \\
            &~{\mathrm{Re}(\mathbf{a}_{1}^{\rm{H}}\mathbf{w}) \geq\sqrt {b_{1}}} \\
           & \left|\mathbf{a}_{2}^{\rm{H}}\mathbf{w}\right|\geq\sqrt b_{2}
\end{aligned}
\end{gather}

Substituting the expression of ${\bf{a}}_i$, we can rewrite the above problem as
\begin{gather} \label{eq_problem_sub2_1c}
\begin{aligned}
\mathop{\mathrm{Minimize}}\limits_{{\bf{w}}}~~~~&\mathop{\mathrm{Max}}\limits_{i}{\{[\mathbf{w}]_{i}[\mathbf{w}]_{i}^{*}\}}\\
\mathrm{Subject~to} ~~~&~{\mathrm{Re}(\sum\limits_{i=1}^{N}[\mathbf{w}]_{i}e^{j\theta_{i}}) \geq\sqrt {b_{1}}} \\
& \left|\sum\limits_{i=1}^{N}[\mathbf{w}]_{i}e^{j\eta_{i}}\right|\geq\sqrt {b_{2}}
\end{aligned}
\end{gather}
where $\theta_{i}=(i-1)\pi\Omega_{1}$, $\eta_{i}=(i-1)\pi\Omega_{2}$.

Problem \eqref{eq_problem_sub2_1c} is still not convex because of the absolute value operation in the second constraint. Thus, we can split it into a serial of convex optimization problems, i.e., we assume different phases for $\sum_{i=1}^{N}[\mathbf{w}]_{i}e^{j\eta_{i}}$ and obtain $M$ convex problems
\begin{gather}\label{eq_problem_sub2_1d}
\begin{aligned}
\mathop{\mathrm{Minimize}}\limits_{{\bf{w}}}~~~~&\mathop{\mathrm{Max}}\limits_{i}{[\mathbf{w}]_{i}[\mathbf{w}]_{i}^{*}}\\
\mathrm{Subject~to} ~~~&~{\mathrm{Re}(\sum\limits_{i=1}^{N}[\mathbf{w}]_{i}e^{j\theta_{i}}) \geq\sqrt {b_{1}}} \\
&~{\mathrm{Re}(\left(\sum\limits_{i=1}^{N}[\mathbf{w}]_{i}e^{j\eta_{i}}\right)e^{j\frac{m}{M}2\pi }) \geq\sqrt {b_{2}}}
\end{aligned}
\end{gather}
where $M$ is the number of total candidate phases $(m=1,2,\cdots,M)$. Each of these $M$ problems can be efficiently solved by using standard convex optimization tools. We select the solution with the minimal objective among the $M$ optimal solutions as the optimal solution $\mathbf{w}_0^\star$, and then we normalize it such that the vector has unit power: $\mathbf{w}_1^\star=\mathbf{w}_0^\star/\|\mathbf{w}_0^\star\|$.

As we have mentioned, the original purpose of the beamforming problem is to obtain ${\bf{w}}$ satisfying $|{\bf{h}}_i^{\rm{H}}{\bf{w}}|^2=c_i^\star$ ($i=1,2$) under the CM constraint. We formulate the beamforming problem as \eqref{eq_problem_sub2} to pursue a relatively easy solution. However, there is no guarantee that the obtained ${\bf{w}}_1^\star$ by solving \eqref{eq_problem_sub2} or \eqref{eq_problem_sub2_1} satisfies the CM constraint. Hence, after solving Problem \eqref{eq_problem_sub2_1}, we still need to do CM normalization, i.e., to normalize ${\bf{w}}_1^\star$ to $\mathbf{w}^\star$, so as to satisfy the CM constraint with the phases of its elements unchanged. The CM normalization is given by
\begin{equation}\label{eq_CMNorm}
 [\mathbf{w}^\star]_{k}=\frac{[\mathbf{w}_1^\star]_{k}}{\sqrt{N}|[\mathbf{w}_1^\star]_k|},~k=1,2,...,N
\end{equation}

Although the CM normalization is natural, if the moduli of the elements of ${\bf{w}}_1^\star$ are different with each other, the CM normalization would result in significant influence, and the finally achieved beam gains, i.e., $c_i$, would be far away from $c_i^\star$. In such a case, the sum rate performance would be not satisfactory. Hence, we need to evaluate the impact of the CM normalization on performance. To this end, we give the following theorem.
\begin{theorem} If $\mathbf{w}_{\rm{opt}}$ is the optimal solution of Problem \eqref{eq_problem_sub2_1d}, then, at least $(N-1)$ elements of $\mathbf{w}_{\rm{opt}}$ have the same modulus, and the remaining one element has a smaller modulus than the $(N-1)$ elements.
\end{theorem}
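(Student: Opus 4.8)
The plan is to recognize Problem \eqref{eq_problem_sub2_1d} as a convex program (of second-order-cone type) and read off the structure of its minimizer from the Karush--Kuhn--Tucker (KKT) conditions, which are necessary and sufficient here because Slater's condition holds (a strictly feasible point plainly exists: scale up any $\mathbf{w}$ that makes both real-part functionals strictly positive, and take $\alpha$ large enough). The structural fact I will lean on is that the objective depends on $\mathbf{w}$ only through $\alpha$: the entries $[\mathbf{w}]_i$ are pinned only by the two linear gain inequalities and by the box constraints $[\mathbf{w}]_i[\mathbf{w}]_i^{*}\le\alpha$. So an entry not at the largest modulus feels no restoring force from the objective, and KKT stationarity at that coordinate must then be a balance between the two linear-constraint gradients alone, which is a very rigid condition.

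Concretely, write $\phi_m=\frac{2\pi m}{M}$ and attach multipliers $\mu_i\ge 0$ to $[\mathbf{w}]_i[\mathbf{w}]_i^{*}\le\alpha$ and $\nu_1,\nu_2\ge 0$ to the two gain constraints. Stationarity in $\alpha$ gives $\sum_{i=1}^{N}\mu_i=1$, so the $\mu_i$ do not all vanish; combined with the positivity of $b_1$ (which forbids $\mathbf{w}=\mathbf{0}$) this yields $\alpha^{\star}=\max_i|[\mathbf{w}_{\rm opt}]_i|^{2}>0$. Stationarity in the complex entry $[\mathbf{w}]_i$ gives
\[
\mu_i\,[\mathbf{w}_{\rm opt}]_i=\tfrac{\nu_1}{2}e^{-j\theta_i}+\tfrac{\nu_2}{2}e^{-j(\eta_i+\phi_m)},\qquad i=1,\dots,N ,
\]
with $\theta_i=(i-1)\pi\Omega_1$, $\eta_i=(i-1)\pi\Omega_2$. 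For every $i$ with $\mu_i>0$, complementary slackness forces $|[\mathbf{w}_{\rm opt}]_i|^{2}=\alpha^{\star}$, i.e.\ that entry sits at the maximal modulus $\sqrt{\alpha^{\star}}$; it remains only to bound the number of indices with $\mu_i=0$.

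Suppose $\mu_k=0$. Then the identity above collapses to $\nu_1 e^{-j\theta_k}+\nu_2 e^{-j(\eta_k+\phi_m)}=0$. Taking moduli and using $\nu_1,\nu_2\ge 0$ forces $\nu_1=\nu_2$; this common value cannot be $0$, since otherwise the stationarity identity would give $[\mathbf{w}_{\rm opt}]_i=0$ at every $i$ with $\mu_i>0$ (a nonempty set, because $\sum\mu_i=1$), contradicting $|[\mathbf{w}_{\rm opt}]_i|=\sqrt{\alpha^{\star}}>0$. Hence $\nu_1=\nu_2>0$ and the identity becomes $e^{j(\eta_k+\phi_m-\theta_k)}=-1$, i.e.
\[
(k-1)\pi(\Omega_2-\Omega_1)\equiv \pi-\phi_m \pmod{2\pi}.
\]
Since $\Omega_1\neq\Omega_2$, the map $k\mapsto (k-1)\pi(\Omega_2-\Omega_1)\bmod 2\pi$ is injective on $\{1,\dots,N\}$ for all AoD pairs outside a measure-zero set (precisely whenever $\ell(\Omega_2-\Omega_1)\notin 2\mathbb{Z}$ for $\ell=1,\dots,N-1$), so at most one index $k$ can satisfy the displayed congruence. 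Therefore at least $N-1$ multipliers are strictly positive, the corresponding $N-1$ entries of $\mathbf{w}_{\rm opt}$ all have modulus exactly $\sqrt{\alpha^{\star}}$, and the at most one remaining entry has modulus $\le\sqrt{\alpha^{\star}}$, strictly smaller unless it too attains the bound (in which case $\mathbf{w}_{\rm opt}$ is already constant-modulus, still consistent with the claim that $N-1$ entries share a modulus). That is exactly the statement.

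The main obstacle is this last uniqueness step: one has to argue that the phase-cancellation congruence is met by at most one antenna index, which is where $\Omega_1\neq\Omega_2$, together with a mild genericity of the AoDs relative to $N$, is genuinely needed, since for a special rational $\Omega_2-\Omega_1$ several entries could legitimately drop below the maximal modulus. Two lesser points to nail down are the verification of Slater's condition (so that KKT applies and the dual variables exist) and the observation that $\mu_k=0$ only permits, rather than forces, $|[\mathbf{w}_{\rm opt}]_k|<\sqrt{\alpha^{\star}}$, so the precise conclusion is the dichotomy ``exactly $N-1$ entries at the maximal modulus with one strictly below, or all $N$ at the maximal modulus.''
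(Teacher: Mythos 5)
Your proof is correct in substance but follows a genuinely different route from the paper's. The paper does not invoke duality at all: it first fixes the values $d_1,d_2$ attained by the two linear functionals at the optimum, so that the problem becomes one with two linear \emph{equality} constraints, then orders the moduli $|[\mathbf{w}_{\rm opt}]_{\pi_1}|\le\cdots\le|[\mathbf{w}_{\rm opt}]_{\pi_N}|$ and argues by contradiction: if any inequality after $\pi_2$ were strict, one could express $[\mathbf{w}_{\rm opt}]_{\pi_1},[\mathbf{w}_{\rm opt}]_{\pi_2}$ affinely in the remaining $N-2$ coordinates, shrink those coordinates by a factor $1/(1+\delta)$, and check by an $\varepsilon$--$\delta$ estimate that the resulting feasible point has a strictly smaller largest modulus. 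Your KKT argument buys more: it identifies \emph{which} index can be the sub-maximal one (the one where the two steering phasors anti-align, $e^{j(\eta_k+\phi_m-\theta_k)}=-1$) and shows $\nu_1=\nu_2>0$ there, information the perturbation argument does not provide. What it costs is the genericity caveat you flag yourself: when $\ell(\Omega_2-\Omega_1)\in 2\mathbb{Z}$ for some $1\le\ell\le N-1$, the phase congruence admits several solutions and your count of indices with $\mu_k=0$ no longer caps the number of sub-maximal entries at one. It is worth noting, however, that the paper's proof quietly carries the same degeneracy: solving for $[\mathbf{w}_{\rm opt}]_{\pi_1},[\mathbf{w}_{\rm opt}]_{\pi_2}$ in terms of the other coordinates requires the corresponding $2\times 2$ minor of the two constraint rows to be nonsingular, which fails exactly when $(\pi_1-\pi_2)(\Omega_1-\Omega_2)\in 2\mathbb{Z}$; so the theorem is stated unconditionally but neither argument covers the resonant AoD configurations. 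Two minor points: your final dichotomy (``$N-1$ entries at the maximal modulus and one weakly below'') matches what the paper actually proves and uses, since it restates the claim with ``no more than'' immediately after the theorem; and Slater's condition does hold as you assert, e.g.\ by taking $[\mathbf{w}]_i=t\bigl(e^{-j\theta_i}+e^{-j(\eta_i+\phi_m)}\bigr)$ with $t$ large, because $\bigl|\sum_{i=1}^{N} e^{j(i-1)\pi(\Omega_1-\Omega_2)}\bigr|<N$ whenever $\Omega_1\ne\Omega_2$.
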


\begin{proof}
See Appendix A.
\end{proof}

According to Theorem 1, since $\mathbf{w}_0^\star$ is the optimal solution of Problem \eqref{eq_problem_sub2_1d}, and $\mathbf{w}_1^\star$ is the 2-norm normalization of $\mathbf{w}_0^\star$, at least $(N-1)$ elements of $\mathbf{w}_1^\star$ have the same modulus, and the remaining one element has a modulus no more than the $(N-1)$ elements, i.e., $0\leq|[\mathbf{w}_1^\star]_1|\leq\frac{1}{\sqrt{N}}$ and
\begin{equation}
\frac{1}{\sqrt{N}}\leq|[\mathbf{w}_1^\star]_2|=|[\mathbf{w}_1^\star]_3|=...=|[\mathbf{w}_1^\star]_N|\leq\frac{1}{\sqrt{N-1}}
\end{equation}

\begin{theorem} Given ${\bf{b}}$ an arbitrary CM vector, i.e., $|{\bf{b}}|={\bf{1}}$, $\big||{\bf{b}}^{\rm{H}}{\bf{w}}^\star|-|{\bf{b}}^{\rm{H}}{\bf{w}}_1^\star|\big|<\frac{2}{\sqrt{N}}$.
\end{theorem}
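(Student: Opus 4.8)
The plan is to control $\big||\mathbf{b}^{\rm{H}}\mathbf{w}^\star| - |\mathbf{b}^{\rm{H}}\mathbf{w}_1^\star|\big|$ by the size of the perturbation that the CM normalization \eqref{eq_CMNorm} introduces in $\mathbf{w}_1^\star$, and then to bound that perturbation via Theorem 1. By the reverse triangle inequality, $\big||\mathbf{b}^{\rm{H}}\mathbf{w}^\star| - |\mathbf{b}^{\rm{H}}\mathbf{w}_1^\star|\big| \leq |\mathbf{b}^{\rm{H}}(\mathbf{w}^\star - \mathbf{w}_1^\star)|$, so everything reduces to estimating $\mathbf{b}^{\rm{H}}(\mathbf{w}^\star - \mathbf{w}_1^\star)$. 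The choice of norm matters here: Cauchy--Schwarz would give $\|\mathbf{b}\|\,\|\mathbf{w}^\star-\mathbf{w}_1^\star\| = \sqrt{N}\,\|\mathbf{w}^\star-\mathbf{w}_1^\star\|$, and since $\|\mathbf{w}^\star-\mathbf{w}_1^\star\|$ is itself only $O(1/\sqrt{N})$ (it is a difference of two unit-norm vectors that agree in phase), this produces merely an $N$-independent constant rather than the claimed $O(1/\sqrt{N})$ decay. Instead I would use H{\"o}lder's inequality in the $\ell_1/\ell_\infty$ pairing: since $|[\mathbf{b}]_k| = 1$ for every $k$,
\[
|\mathbf{b}^{\rm{H}}(\mathbf{w}^\star-\mathbf{w}_1^\star)| \leq \sum_{k=1}^{N} |[\mathbf{b}]_k|\,\big|[\mathbf{w}^\star]_k - [\mathbf{w}_1^\star]_k\big| = \sum_{k=1}^{N}\big|[\mathbf{w}^\star]_k - [\mathbf{w}_1^\star]_k\big| ,
\]
so it suffices to show this $\ell_1$ distance between $\mathbf{w}_1^\star$ and its CM-normalized version is strictly below $2/\sqrt{N}$.

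For that, note that \eqref{eq_CMNorm} leaves the phase of each entry untouched and only rescales its modulus to $1/\sqrt{N}$, so $\big|[\mathbf{w}^\star]_k - [\mathbf{w}_1^\star]_k\big| = \big|\tfrac{1}{\sqrt{N}} - |[\mathbf{w}_1^\star]_k|\big|$. By Theorem 1, as already recorded just after its statement, $\mathbf{w}_1^\star$ has one entry, say the first, with $0\le|[\mathbf{w}_1^\star]_1|\le 1/\sqrt{N}$ and $N-1$ entries of a common modulus obeying $1/\sqrt{N}\le|[\mathbf{w}_1^\star]_2|=\cdots=|[\mathbf{w}_1^\star]_N|\le 1/\sqrt{N-1}$. (If $|[\mathbf{w}_1^\star]_1|=0$, formula \eqref{eq_CMNorm} is vacuous for that coordinate, but assigning it any modulus-$1/\sqrt{N}$ value leaves the estimate below unchanged.) Hence, writing $s=|[\mathbf{w}_1^\star]_1|$ and $r=|[\mathbf{w}_1^\star]_2|$,
\begin{equation*}
\begin{aligned}
\sum_{k=1}^{N}\Big|\tfrac{1}{\sqrt{N}} - |[\mathbf{w}_1^\star]_k|\Big|
&= \Big(\tfrac{1}{\sqrt{N}} - s\Big) + (N-1)\Big(r-\tfrac{1}{\sqrt{N}}\Big)\\
&\leq \tfrac{1}{\sqrt{N}} + (N-1)\Big(\tfrac{1}{\sqrt{N-1}} - \tfrac{1}{\sqrt{N}}\Big) .
\end{aligned}
\end{equation*}

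It then remains to check the elementary inequality $(N-1)\big(\tfrac{1}{\sqrt{N-1}}-\tfrac{1}{\sqrt{N}}\big) = \sqrt{N-1} - \tfrac{N-1}{\sqrt{N}} < \sqrt{N} - \tfrac{N-1}{\sqrt{N}} = \tfrac{1}{\sqrt{N}}$, which combined with the previous display yields $\sum_k\big|[\mathbf{w}^\star]_k - [\mathbf{w}_1^\star]_k\big| < \tfrac{2}{\sqrt{N}}$, and therefore, via the first display, $\big||\mathbf{b}^{\rm{H}}\mathbf{w}^\star|-|\mathbf{b}^{\rm{H}}\mathbf{w}_1^\star|\big| < 2/\sqrt{N}$. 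I expect the genuinely load-bearing step to be the $\ell_1$ estimate, and within it the non-obvious point is that, although $N-1$ coordinates get perturbed, Theorem 1 pins each of them to within $\tfrac{1}{\sqrt{N-1}}-\tfrac{1}{\sqrt{N}} = O(N^{-3/2})$ of $1/\sqrt{N}$, so their aggregate contribution is $O(N^{-1/2})$ rather than $O(N^{1/2})$; any argument that discards this structure (for instance one based on $\|\mathbf{w}^\star-\mathbf{w}_1^\star\|_2$ together with Cauchy--Schwarz) cannot reach the $2/\sqrt{N}$ bound. Everything else is routine.
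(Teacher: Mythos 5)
Your proposal is correct and follows essentially the same route as the paper's Appendix B: reverse triangle inequality, the entrywise $\ell_1/\ell_\infty$ bound using $|[\mathbf{b}]_k|=1$, the phase-preservation of the CM normalization, and the modulus bounds from Theorem 1 giving $\frac{1}{\sqrt{N}}+(N-1)\bigl(\frac{1}{\sqrt{N-1}}-\frac{1}{\sqrt{N}}\bigr)<\frac{2}{\sqrt{N}}$. The only difference is that you spell out the final elementary inequality $\sqrt{N-1}-\frac{N-1}{\sqrt{N}}<\frac{1}{\sqrt{N}}$, which the paper asserts without verification.
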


\begin{proof}
See Appendix B.
\end{proof}

According to Theorem 2, since ${\bf{a}}_i$ ($i=1,2$) is a CM vector, $\big||{\bf{a}}_i^{\rm{H}}{\bf{w}}^\star|-|{\bf{a}}_i^{\rm{H}}{\bf{w}}_1^\star|\big|<\frac{2}{\sqrt{N}}$, which means that the CM normalization has a limited influence on the desired beam gains, and the influence decreases when $N$ increases.

{
\subsection{Solution of the Original Problem}
Substituting the obtained normalized ${\bf{w^{\star}}}$ above into the original problem, i.e., \eqref{eq_problem}, we obtain
\begin{equation} \label{eq_problem_final}
\begin{aligned}
\mathop{\mathrm{Maximize}}\limits_{p_1}~~~~&\log_{2}(1+ \frac{c_{1}p_{1}}{\sigma^{2}})+\log_{2}(1+ \frac{c_{2}(P-p_{1})}{c_{2}p_{1}+\sigma^{2}})\\
\mathrm{Subject~to} ~~~ &\log_{2}(1+ \frac{c_{1}p_{1}}{\sigma^{2}}) \geq r_{1}\\
&\log_{2}(1+ \frac{c_{2}(P-p_{1})}{c_{2}p_{1}+\sigma^{2}}) \geq r_{2} \\
\end{aligned}
\end{equation}
where the beam gain $c_i=|{\bf{h}}_i^{\rm{H}}{\bf{w^{\star}}}|^2$ ($i=1,2$) are fixed values. Problem \eqref{eq_problem_final} is a single-variable optimization problem. The feasible region and the monotonicity of the objective function are distinct. We can easily obtain the final power allocation $\{p_1,~p_2\}$ by solving the problem above.

It is noteworthy that although the whole solution uses the simplified effective channel model defined in \eqref{eq_effChannel}, it can be used for any exact channel model, because an effective channel model can always be defined based on an exact channel model. Besides, since $|\frac{1}{N}{\bf{a}}(N,\omega_1)^{\rm{H}}{\bf{a}}(N,\omega_2)|$ is small when $|\omega_1-\omega_2|\geq 2/N$ \cite{xiao2016codebook}, it easy to obtain $|\bar{{\bf{h}}}_i^{\rm{H}}{\bf{w}}|\approx |{\bf{h}}_i^{\rm{H}}{\bf{w}}|$, because ${\bf{w}}$ is designed to steer towards the AoD of ${\bf{h}}_i$. In other words, due to the spatial sparsity of the channel model, the beamforming gain towards the AoD of the strongest MPC is affected little by the other MPCs. Thus, when a solution $\{p_1,~p_2,~{\bf{w}}\}$ is obtained using the proposed approach with an effective channel model, the solution is also feasible in general for the original problem in \eqref{eq_problem} when the exact channel model is used. This conclusion is further verified via simulations in Section IV.

}

\subsection{Generalization}
The proposed solution solves the joint power allocation and beamforming problem for 2-user mmWave-NOMA with a ULA. It is natural to consider whether it can be generalized for other types of antenna arrays or more users.

The extension of the solution to other types of antenna arrays, like uniform planar array and uniform circular array, is possible. It is crucial that Lemma 1 may have different forms for different types of arrays. Except Lemma 1, the whole solution does not involve the feature of a specific antenna array (e.g., the antenna spacing and the array shape); thus it can be used for different types of arrays.

{On the other hand, the idea of decomposing the original problem into two sub-problems still work in more-user cases. However, the solution of power and beam gain allocation sub-problem may not be directly used, because when the number of users is greater, the number of variables (including powers and beamforming gains) will be greater. In particular, there will be $(2K-2)$ independent variables in the power and beam gain allocation sub-problem for a $K$-user mmWave-NOMA system. Exhaustive search of the optimal solution may be used, but the complexity is $O((\frac{1}{\epsilon})^{2K-2})$, where $\epsilon$ is the searching precision. When $K$ is large, the complexity will be prohibitively high. On the other hand, the method to solve the 2-user beamforming sub-problem is instructive to find a similar solution for more-user cases. In the 2-user case, we need to search over $M$ possible phases for one user (see \eqref{eq_problem_sub2_1d}). Analogously, in a $K$-user case, we need to search over $M^{K-1}$ possible phases for $(K-1)$ users. In brief, if the number of users is not large, the idea of problem decomposition is still applicable to solve the original problem. However, when $K$ is large, the proposed solution may become not appropriate due to high complexity.

Fortunately, based on the proposed solution, there are other ways to support more users. For instance, one method is to combine with the OMA strategies to manyfold increase the number of users, or to use a hybrid beamforming structure with multiple RF chains, such that the number of users can be increased by $N_{\rm{RF}}$ times, where $N_{\rm{RF}}$ is the number of RF chains. Another method is to still use an analog beamforming structure and shape a few beams. The difference is that each beam steer towards a group of users rather than only one user in this paper. In such a case, we need to consider beam gain allocation between different user groups and power allocation within each user group. This topic will be studied in detail in our future work.}

\section{Performance Evaluations}
In this section, we evaluate the performance of the proposed joint power allocation and beamforming method. {As we consider a scenario that the NOMA users are located in different directions with a phased array, which is different from \cite{Ding2017random}, where the NOMA users are located in the same direction. On the other hand, in \cite{Daill2017}, a lens array was adopted to realize multi-beam forming to serve multiple NOMA users with arbitrary locations, where the power allocation problem is studied under fixed beam pattern. To the best of our knowledge, similar work is not found which considers joint power allocation and CM beamforming in mmWave-NOMA with a phased array. Hence, in this section we mainly make performance comparison between the proposed method and TDMA\footnote{Note that spatial-division multiple access (SDMA) is infeasible here, since an analog beamforming structure was used in the BS.}. In addition, we also compare the performance with the upper bound, which can reflect clearly how good the achieved performance is.} As aforementioned, the joint problem has been decomposed into two sub-problems, namely the power and beam gain allocation sub-problem and the beamforming sub-problem. For the power and beam gain allocation sub-problem, we find the optimal solution; while for the beamforming sub-problem, we find a sub-optimal solution, where the CM normalization operation may result in performance degradation. Hence, we start from the performance evaluation of the beamforming phase.


To compare the ideal beam pattern with the designed beam pattern obtained by solving Problem \eqref{eq_problem_sub2}, we assume $|\lambda_1|=0.8$, $|\lambda_1|=0.5$, $\Omega_1=-0.25$, $\Omega_2=0.4$. The desired beam gains are $c_1^\star=N/2$ and $c_2^\star=(N-c_1^\star/|\lambda_1|^2)|\lambda_2|^2$, where $N$ is the number of antennas at the BS. $M$ in \eqref{eq_problem_sub2_1d} is set to 20 in this simulation as well as the following simulations, which is large enough to obtain the best solution. As the CM normalization affects the shape of beam pattern, we compare the desired ideal beam pattern with the designed beam patterns before and after the CM normalization, i.e., the beam patterns computed with ${\bf{w}}_1^\star$ and ${\bf{w}}^\star$ in \eqref{eq_CMNorm}, respectively. Fig. \ref{fig:beam_pattern} shows the comparison results with $N=16,~32,~64$, and from this figure we can find that the beam gains are significant along the desired user directions, and both the beam patterns before and after the CM normalization are close to the ideal beam pattern along the user directions, which not only demonstrates that the CM normalization has little influence on the beam gains along the user directions, but also shows that the solution of the beamforming sub-problem is reasonable.

\begin{figure*}[t]
\begin{center}
  \includegraphics[width=19 cm]{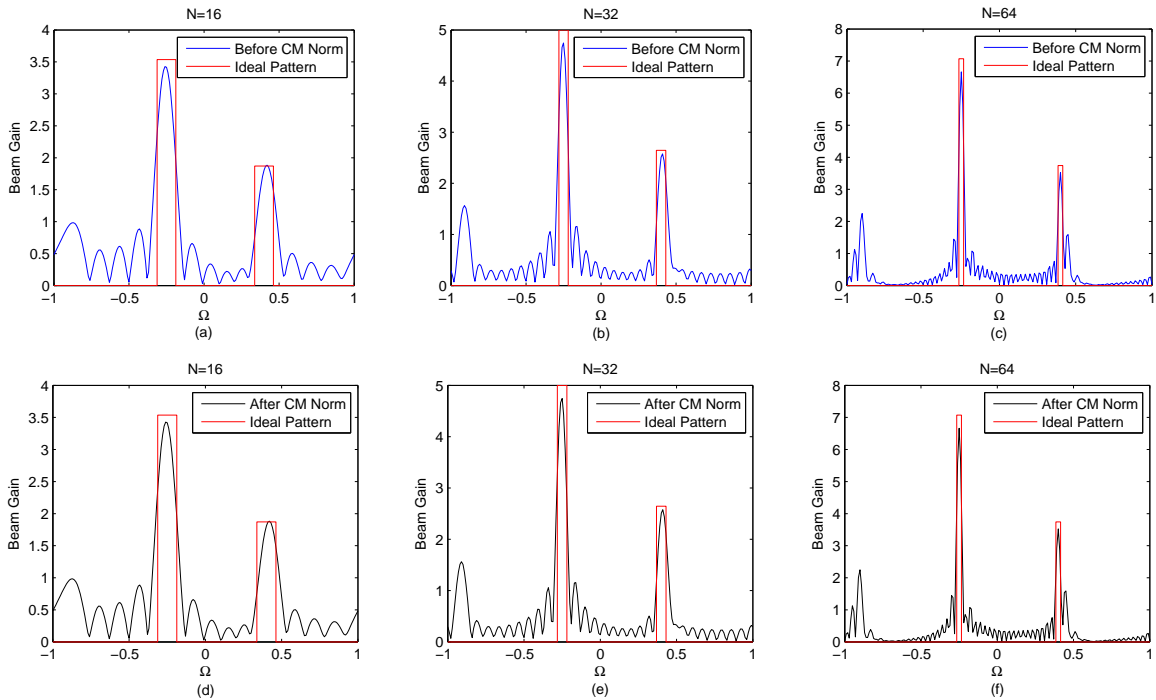}
  \caption{Comparison between the ideal beam pattern and the designed beam patterns before and after the CM normalization.}
  \label{fig:beam_pattern}
\end{center}
\end{figure*}

\begin{figure}[t]
\begin{center}
  \includegraphics[width=\figwidth cm]{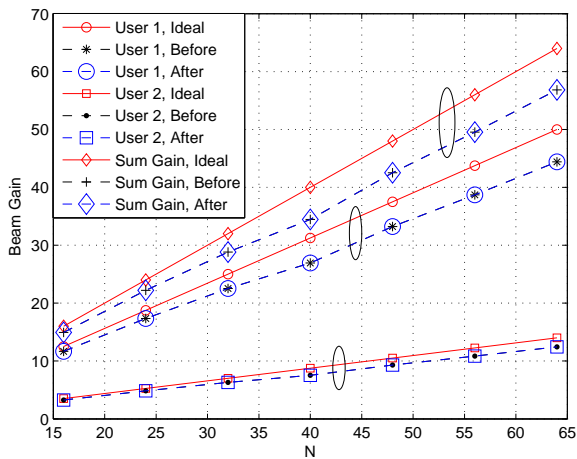}
  \caption{Comparison of user beam gains between the ideal beam gains and the designed beam gains before/after the CM normalization, where the sum gain refers to the summation of the beam gains of User 1 and User 2.}
  \label{fig:Gain_Cmp}
\end{center}
\end{figure}

\begin{figure}[t]
\begin{center}
  \includegraphics[width=\figwidth cm]{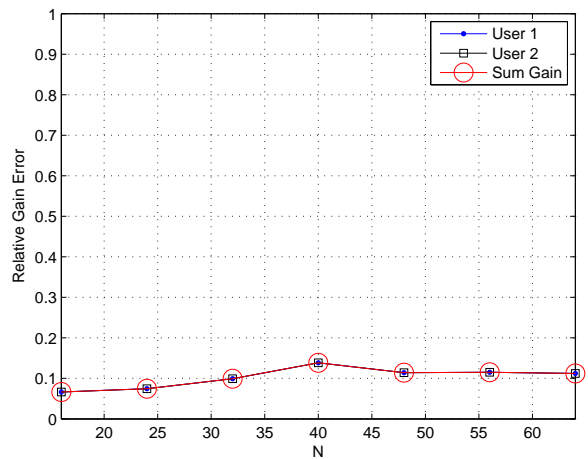}
  \caption{Relative gain errors versus the ideal beam gains of User 1, User 2 and the sum gain.}
  \label{fig:relative_gain_error}
\end{center}
\end{figure}

In addition to the beam pattern comparison, we also compare the user beam gains with varying number of antennas in Fig. \ref{fig:Gain_Cmp}, where the parameter settings are the same as those in Fig. \ref{fig:beam_pattern}. From Fig. \ref{fig:Gain_Cmp}, we can observe again that the user gains before and after the CM normalization are almost the same as each other, which demonstrates again that the CM normalization has little impact on the beamforming performance. Moreover, there is a small gap between the designed user gains and the ideal beam gains for both users (as well as the sum beam gain). This is because the designed beam pattern has side lobes which reduces the gains along the user directions. In comparison, an ideal beam pattern does not have side lobes. Fortunately, the gap increases slowly as $N$ increases when $N\leq 40$, and almost does not increase when $N>40$, which shows that the proposed beamforming method behaves robust against the number of antennas.

Fig. \ref{fig:relative_gain_error} shows the relative gain errors of User 1, User 2 and the sum gain versus the ideal/desired beam gains, where the parameter settings are the same as those in Fig. \ref{fig:Gain_Cmp}. Interestingly, from Fig. \ref{fig:relative_gain_error} we find that the relative beam gains of User 1 and User2, as well as the sum beam gain, are almost the same as each other. Moreover, the relative gain errors are small, roughly around 0.1, and they increase slowly as $N$ increases when $N\leq 40$, and almost does not increase when $N>40$. This result not only demonstrates again that the proposed beamforming method behaves robust against the number of antennas, but also shows the rational of Lemma 1, i.e., the sum beam gain can be roughly seen a constant versus $N$.

The above evaluations show that the solution of the beamforming sub-problem is reasonably close to the ideal one. Next, we evaluate the overall performance. Fig. \ref{fig:AR_Rate} shows the comparison between the performance bound and the designed achievable rates with varying rate constraint. The performance bound refers to the achievable rate obtained by solving only the power and beam gain allocation sub-problem \eqref{eq_problem_sub1}, i.e., with parameters $\{c_1^\star,~c_2^\star,~p_1^\star,~p_2^\star\}$, where the beamforming is assumed ideal. The designed performance refers to the achievable rate obtained by solving the original problem \eqref{eq_problem}. Relevant parameter settings are $\sigma^2=1$ mW, $P=100$ mW, $N=32$, $|\lambda_1|=0.8$, $|\lambda_1|=0.5$, $\Omega_1=-0.25$, $\Omega_2=0.4$. From Fig. \ref{fig:AR_Rate} we can find that the designed achievable rates are close to the achievable-rate bound for both User 1 and User 2, as well as the sum rate, which demonstrates that the proposed solution to the original problem is rational and effective, i.e., it can achieve close-to-bound performance. On the other hand, we can find that most power or beam gain is allocated to User 1, which has the better channel condition, so as to optimize the sum rate. Only necessary power or beam gain is allocated to User 2 to satisfy the rate constraint. That is why User 2 always achieves an achievable rate equal to the rate constraint.

\begin{figure}[t]
\begin{center}
  \includegraphics[width=\figwidth cm]{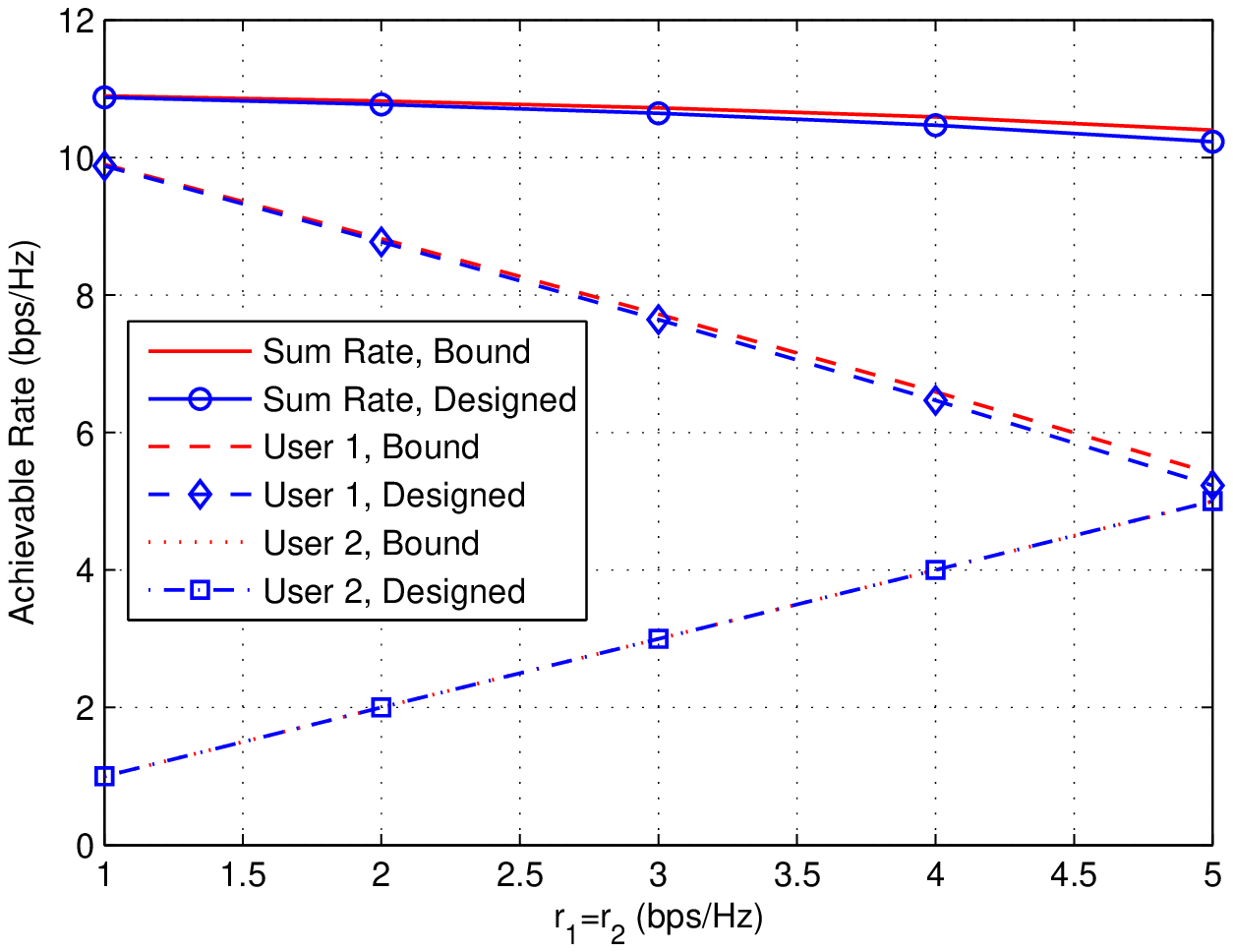}
  \caption{Comparison between the performance bound and the designed achievable rates with varying rate constraint.}
  \label{fig:AR_Rate}
\end{center}
\end{figure}

\begin{figure}[t]
\begin{center}
  \includegraphics[width=\figwidth cm]{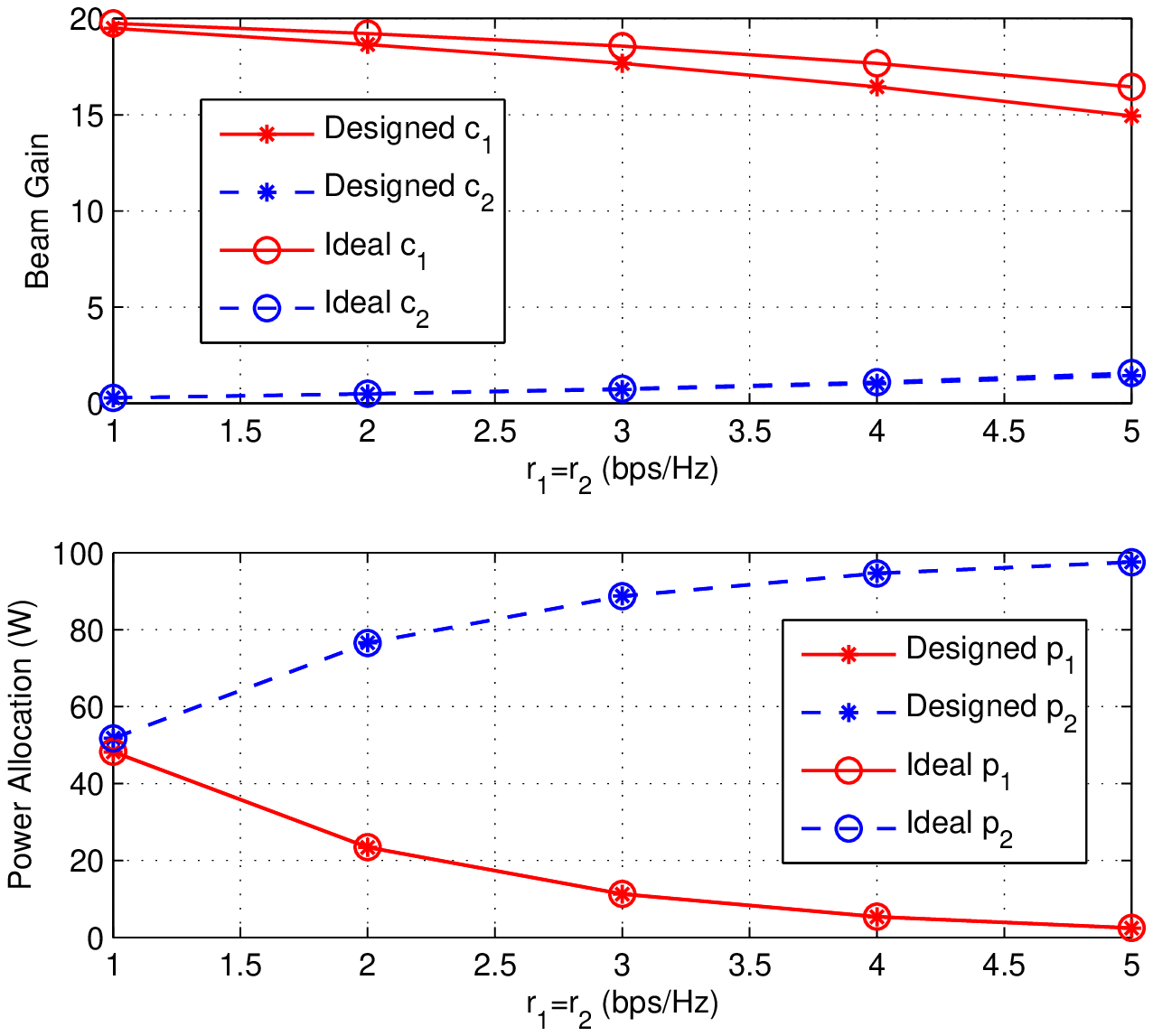}
  \caption{Comparison between the ideal values of the parameters and the designed values with varying rate constraint.}
  \label{fig:Parameters_rate}
\end{center}
\end{figure}

Fig. \ref{fig:Parameters_rate} shows the comparison between the ideal values of the parameters and the designed values with varying rate constraint. The ideal values $\{c_1,~c_2,~p_1,~p_2\}$ refer to $\{c_1^\star,~c_2^\star,~p_1^\star,~p_2^\star\}$, i.e., the optimal solution of the power and beam gain allocation sub-problem \eqref{eq_problem_sub1}, while the designed $\{c_1,~c_2\}$ refer to $\{|{\bf{h}}_1^{\rm{H}}{\bf{w}}^\star|^2,~|{\bf{h}}_2^{\rm{H}}{\bf{w}}^\star|^2\}$, i.e., the beam gains achieved by the final solution to the original problem \eqref{eq_problem}. The parameter settings are the same as those in Fig. \ref{fig:AR_Rate}. From Fig. \ref{fig:Parameters_rate} we can find that the designed beam gains are close to the ideal gains. The gap between the designed gain and the ideal gain for User 1 is due to the fact that there are side lobes for the designed beam pattern, but for the ideal beam pattern there is no side lobe. More importantly, for User 1, which has a better channel condition, the beam gain is much higher than User 2, while the allocated power is smaller than User 2. This result is the same with the conventional 2-user NOMA system, where necessary power should be allocated to the user with a worse channel condition to satisfy the rate constraint, and the rest power is allocated to the better one to maximize the sum rate. Also, we can observe that as the rate constraint increases, the beam gain and power of User 1 decrease, while those of User 2 increase, but the varying speed of beam gain is much slower than that of power for both users.

Fig. \ref{fig:AR_Power} shows the comparison between the performance bound and the designed achievable rates with varying total power to noise ratio. Relevant parameter settings are $N=32$, $|\lambda_1|=0.8$, $|\lambda_2|=0.5$, $\Omega_1=-0.25$, $\Omega_2=0.4$, $r_1=r_2=3$ bps/Hz. From this figure we can observe the similar results as those from Fig. \ref{fig:AR_Rate}, i.e., the designed achievable rates are close to the performance bounds for both User 1 and User 2, as well as the sum rate, and most power or beam gain is allocated to User 1 to optimize the sum rate, while only necessary power or beam gain is allocated to User 2 to satisfy the rate constraint.

\begin{figure}[t]
\begin{center}
  \includegraphics[width=\figwidth cm]{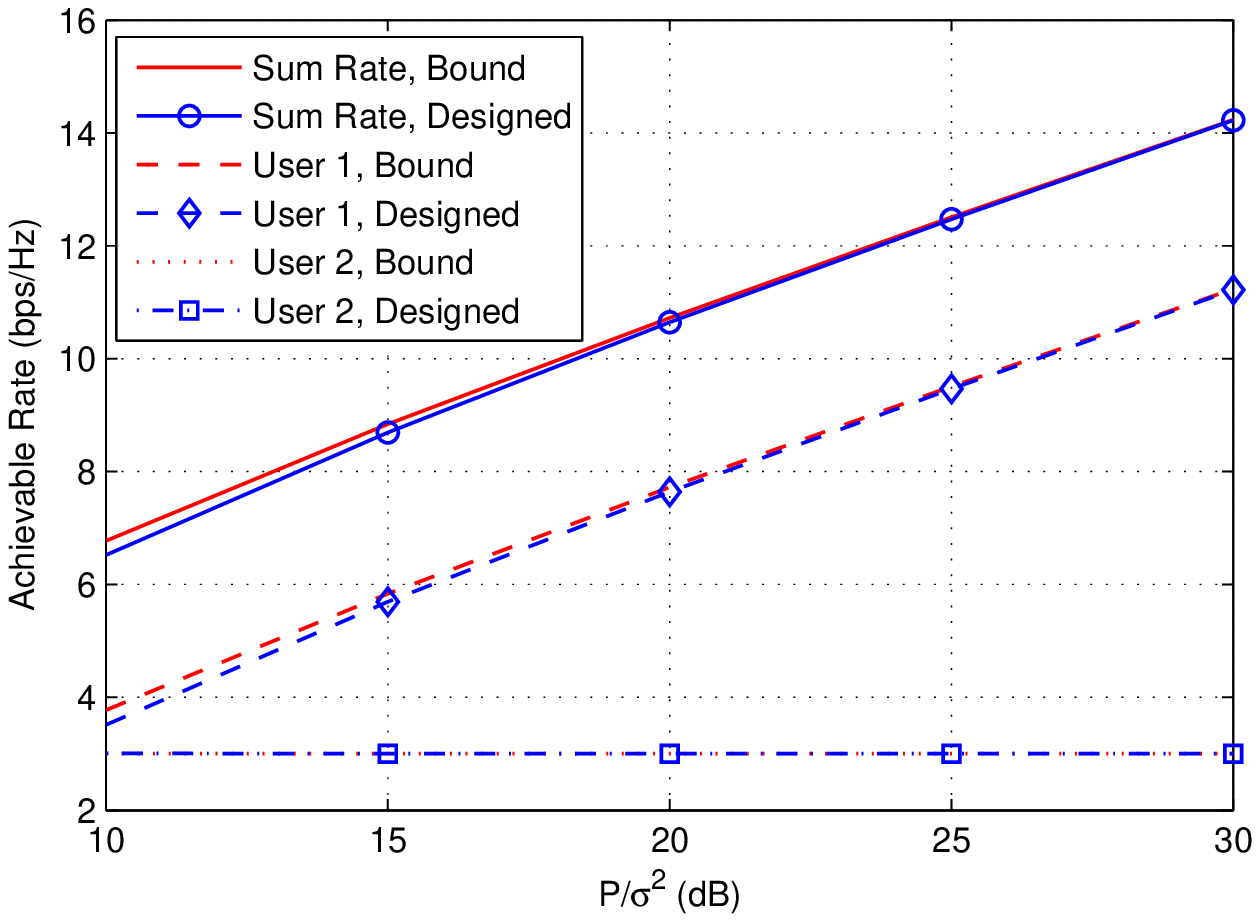}
  \caption{Comparison between the performance bound and the designed achievable rates with varying total power to noise ratio.}
  \label{fig:AR_Power}
\end{center}
\end{figure}

Fig. \ref{fig:Parameters_Power} shows the comparison between the ideal values of the parameters and the designed values with varying total power to noise ratio. The parameter settings are the same as those in Fig. \ref{fig:AR_Power}, and $\sigma^2=1$ mW here. Again, we can find that the designed beam gains are close to the ideal gains. For User 1, the beam gain is much higher than User 2, while the allocated power is smaller than User 2. Also, we can observe that as $P/\sigma^2$ increases, the beam gain and power of User 1 increase, while the beam gain of User 2 decreases on the contrary.

From Figs. \ref{fig:Parameters_rate} and \ref{fig:Parameters_Power}, we can find that for User 2, the beam gain is small in general, and varies slowly as the rate constraint and the total power to noise ratio increases. This is because, as shown in \eqref{eq_problem_sub1}, when increasing the beam gain the interference from User 1 also increases, but when increasing the power the interference does not increases. Hence, for User 2 the beamforming gain is small in general. Most of the beam gain is allocated to User 1, because for User 1 the interference from User 2 can be decoded and removed by using SIC.

\begin{figure}[t]
\begin{center}
  \includegraphics[width=\figwidth cm]{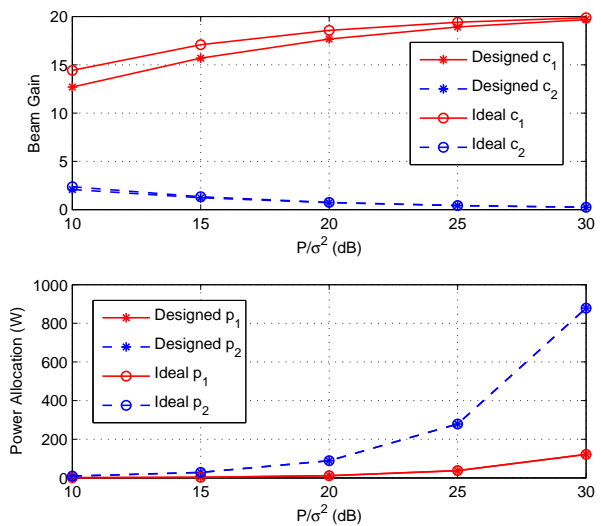}
  \caption{Comparison between the ideal values of the parameters and the designed values with varying total power to noise ratio.}
  \label{fig:Parameters_Power}
\end{center}
\end{figure}

We have demonstrated that the proposed joint power allocation and beamforming approach works well for the mmWave-NOMA system, and can achieve close-to-bound sum-rate performance. Next, we want to compare the performance of mmWave-NOMA with that of TDMA. In addition, we have used the effective channel in \eqref{eq_effChannel} in the analysis and derivation. In practice, the original channel model in \eqref{eq_oriChannel} should be adopted. Hence, we need to compare the practical performance of the proposed solution under the original channel and the theoretical performance under the effective channel.

Figs. \ref{fig:AR_Cmp_Rate} and \ref{fig:AR_Cmp_Power} show the comparison results of sum rate between theoretical mmWave-NOMA, practical mmWave-NOMA and TDMA with varying rate constraint and varying total power to noise ratio, respectively, where $N=32$ and $L_1=L_2=L=4$. Note that the theoretical mmWave-NOMA is the performance under the effective channel model \eqref{eq_effChannel} and the practical mmWave-NOMA is the performance under the original channel model \eqref{eq_oriChannel}. User 1 has a better channel condition than User 2, i.e., the average power ratio of them is $(1/0.3)^2$. For Fig. \ref{fig:AR_Cmp_Rate}, $\sigma^2=1$ mW and $P=100$ mW, while for Fig. \ref{fig:AR_Cmp_Power} $r_1=r_2=3$ bps/Hz. Both LOS and NLOS channel models are considered. For LOS channel, the first path is the LOS path, which has a constant power\footnote{The LOS component may also be modeled as Rayleigh fading, but the performance with a LOS channel will be similar to that with an NLOS channel.}, i.e., $|\lambda_1|=1$ (0 dB), while the coefficients of the other 3 NLOS paths, i.e., $\{\lambda_i\}_{i=2,3,4}$, obey the complex Gaussian distribution with zero mean, and each of them has an average power of -10/-15 dB. For the NLOS channel, the 4 paths are all NLOS paths with zero-mean complex Gaussian distributed coefficients, and each of them has an average power of $1/\sqrt{L}$. Each point in Figs. \ref{fig:AR_Cmp_Rate} and \ref{fig:AR_Cmp_Power} is the average performance based on $10^3$ channel realizations. With each channel realization, the optimal parameters are obtained by the proposed solution, and the theoretical/practical performances are obtained by computing the sum rates with the effective/original channel. The performance of TDMA is obtained based on the assumption that the beam gains of User 1 and User 2 are equal, i.e., $N/2$. From these two figures we can observe that the theoretical performance is very close to the practical performance, which demonstrates the rational of the proposed method. Moreover, the performance of mmWave-NOMA is significantly better than that of TDMA under both the LOS and NLOS channels.

\begin{figure}[t]
\begin{center}
  \includegraphics[width=\figwidth cm]{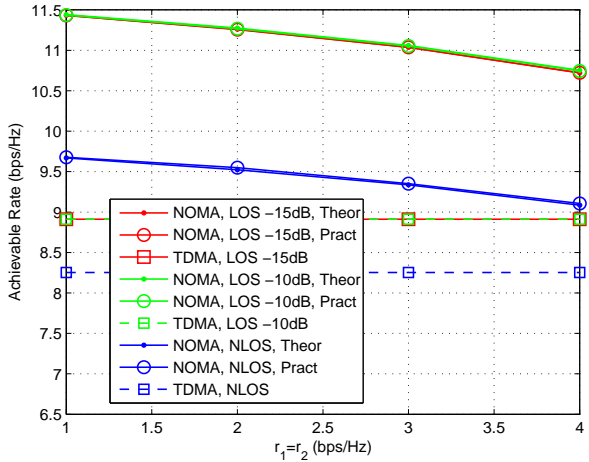}
  \caption{Comparison of sum rate between theoretical mmWave-NOMA, practical mmWave-NOMA and TDMA with varying rate constraint.}
  \label{fig:AR_Cmp_Rate}
\end{center}
\end{figure}

\begin{figure}[t]
\begin{center}
  \includegraphics[width=\figwidth cm]{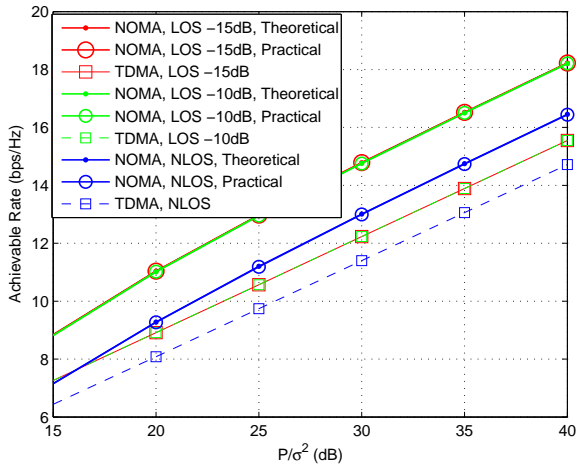}
  \caption{Comparison of sum rate between theoretical mmWave-NOMA, practical mmWave-NOMA and TDMA with varying total power to noise ratio.}
  \label{fig:AR_Cmp_Power}
\end{center}
\end{figure}

\section{Conclusion}
In this paper we have investigated the problem of how to maximize the sum rate of a 2-user mmWave-NOMA system, where we need to find the beamforming vector to steer towards the two users simultaneously subject to an analog beamforming structure, and meanwhile allocate appropriate power to them. We have proposed a suboptimal solution to this problem, i.e., to decompose the original problem into two sub-problems: one is a power and beam gain allocation problem, and the other is a beamforming problem under the CM constraint. The original problem can then be solved by solving the two sub-problems. Extensive performance evaluations verify the rational of the proposed solution, and show that the solution can achieve close-to-bound sum-rate performance, which is distinctively better than TDMA.


%

\appendices
\section{Proof of Theorem 1}

Let $\mathbf{w}_{\rm{opt}}$ represent the optimal solution of Problem \eqref{eq_problem_sub2_1d}, and assume
\begin{equation}
\left\{\begin{aligned}
  &\sum\limits_{i=1}^{N}[\mathbf{w}_{\rm{opt}}]_{i}e^{j\theta_{i}}=d_{1}
  \\&\sum\limits_{i=1}^{N}[\mathbf{w}_{\rm{opt}}]_{i}e^{j\eta_{i}}=d_{2}e^{-j\frac{m}{M}2\pi }
\end{aligned}\right.
\end{equation}
where $d_1$ and $d_2$ are positive real values.

\begin{lemma}
Given $d_{1}, d_{2}$, Problem \eqref{eq_problem_sub2_1d} is equivalent to
\begin{gather} \label{eq_problem_sub2_1b}
\begin{aligned}
\mathop{\mathrm{Minimize}}\limits_{{\bf{w}}}~~~~&\mathop{\mathrm{Max}}\limits_{i}{\{[\mathbf{w}]_{i}[\mathbf{w}]_{i}^{*}\}}\\
\mathrm{Subject~to} ~~~&~\sum\limits_{i=1}^{N}[\mathbf{w}]_{i}e^{j\theta_{i}}=d_{1}\\
                    &~\sum\limits_{i=1}^{N}[\mathbf{w}]_{i}e^{j\eta_{i}}=d_{2}e^{-j\frac{m}{M}2\pi }
\end{aligned}
\end{gather}
\end{lemma}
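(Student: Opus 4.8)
The plan is to prove the equivalence by a feasible-region inclusion argument, combined with the observation that $\mathbf{w}_{\rm{opt}}$ optimizes both problems simultaneously. First I would record the one fact that drives everything: since $\mathbf{w}_{\rm{opt}}$ is feasible for Problem~\eqref{eq_problem_sub2_1d}, the two linear functionals it attains must meet their thresholds. With the sums written as in the setup, namely $\sum_{i=1}^{N}[\mathbf{w}_{\rm{opt}}]_{i}e^{j\theta_{i}}=d_{1}$ and $\sum_{i=1}^{N}[\mathbf{w}_{\rm{opt}}]_{i}e^{j\eta_{i}}=d_{2}e^{-j\frac{m}{M}2\pi}$ with $d_{1},d_{2}$ real and positive, the first constraint of \eqref{eq_problem_sub2_1d} gives $d_{1}=\mathrm{Re}\left(\sum_{i=1}^{N}[\mathbf{w}_{\rm{opt}}]_{i}e^{j\theta_{i}}\right)\geq\sqrt{b_{1}}$, and the second gives $d_{2}=\mathrm{Re}\left(\big(\sum_{i=1}^{N}[\mathbf{w}_{\rm{opt}}]_{i}e^{j\eta_{i}}\big)e^{j\frac{m}{M}2\pi}\right)\geq\sqrt{b_{2}}$. (That these sums may be taken as a positive real $d_{1}$ and as $d_{2}e^{-j\frac{m}{M}2\pi}$ is inherited from the global-phase normalization leading to \eqref{eq_problem_sub2_1a} and from the choice of the phase index $m$.)

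Next I would show that the feasible set of Problem~\eqref{eq_problem_sub2_1b} is contained in that of Problem~\eqref{eq_problem_sub2_1d}. Indeed, any $\mathbf{w}$ obeying the equalities $\sum_{i}[\mathbf{w}]_{i}e^{j\theta_{i}}=d_{1}$ and $\sum_{i}[\mathbf{w}]_{i}e^{j\eta_{i}}=d_{2}e^{-j\frac{m}{M}2\pi}$ automatically obeys $\mathrm{Re}(\sum_{i}[\mathbf{w}]_{i}e^{j\theta_{i}})=d_{1}\geq\sqrt{b_{1}}$ and $\mathrm{Re}\big((\sum_{i}[\mathbf{w}]_{i}e^{j\eta_{i}})e^{j\frac{m}{M}2\pi}\big)=d_{2}\geq\sqrt{b_{2}}$, while the per-antenna constraint $[\mathbf{w}]_{i}[\mathbf{w}]_{i}^{*}\leq\alpha$ is literally identical in both problems. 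Since the two problems share the same objective and \eqref{eq_problem_sub2_1b} minimizes over a subset of the feasible region, its optimal value is no smaller than that of \eqref{eq_problem_sub2_1d}.

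Finally I would close the loop: by the very definition of $d_{1},d_{2}$, the vector $\mathbf{w}_{\rm{opt}}$ satisfies both equality constraints of \eqref{eq_problem_sub2_1b}, hence is feasible for it, so the optimal value of \eqref{eq_problem_sub2_1b} is at most the objective at $\mathbf{w}_{\rm{opt}}$, which equals the optimal value of \eqref{eq_problem_sub2_1d}. Combining the two inequalities, the optimal values coincide and $\mathbf{w}_{\rm{opt}}$ is an optimizer of \eqref{eq_problem_sub2_1b} as well; conversely, any optimizer of \eqref{eq_problem_sub2_1b} is feasible for \eqref{eq_problem_sub2_1d} and attains its optimal value, hence optimizes \eqref{eq_problem_sub2_1d}. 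This is the asserted equivalence, and its purpose is that the structural analysis of $\mathbf{w}_{\rm{opt}}$ required for Theorem~1 can now be carried out on the cleaner, equality-constrained Problem~\eqref{eq_problem_sub2_1b}.

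There is no deep difficulty in this argument; the only point requiring care is the \emph{direction} of the set inclusion — one must verify that tightening ``$\geq$'' to ``$=$'' at the attained values $d_{1},d_{2}$ \emph{shrinks} rather than enlarges the feasible region, which is precisely why the inequalities $d_{1}\geq\sqrt{b_{1}}$ and $d_{2}\geq\sqrt{b_{2}}$, i.e. the \emph{a priori} feasibility of $\mathbf{w}_{\rm{opt}}$, are needed. If one wishes to be scrupulous, one would also check the harmless phase bookkeeping noted above, namely that the two sums evaluated at $\mathbf{w}_{\rm{opt}}$ really can be written with $d_{1}$ real positive and with the stated phase $-\frac{m}{M}2\pi$ on the second.
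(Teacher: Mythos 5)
Your proof is correct and follows essentially the same route as the paper's: the paper's argument is exactly the two feasibility implications you give ($\mathbf{w}_{\rm{opt}}$ is feasible for the equality-constrained problem by the definition of $d_1,d_2$, and any feasible point of the equality-constrained problem satisfies the inequality constraints because $d_1\geq\sqrt{b_1}$ and $d_2\geq\sqrt{b_2}$), with your version merely spelling out the resulting sandwich on the optimal values more explicitly. The phase-bookkeeping caveat you raise is a fair observation about the paper's own setup (which simply \emph{assumes} the sums take the form $d_1$ and $d_2 e^{-j\frac{m}{M}2\pi}$ with $d_1,d_2$ positive real), but it does not affect the validity of the equivalence argument.
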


\begin{proof}
According to the definitions of $d_{1}$ and $d_{2}$, the optimal solution of Problem \eqref{eq_problem_sub2_1d}, i.e., $\mathbf{w}_{\rm{opt}}$, is a feasible solution of Problem \eqref{eq_problem_sub2_1b}.

On the other hand, since $d_1\geq \sqrt{b_1}$ and $d_2\geq \sqrt{b_2}$, the optimal solution of Problem \eqref{eq_problem_sub2_1b} must be a feasible solution of Problem \eqref{eq_problem_sub2_1d}.

In summary, Problem \eqref{eq_problem_sub2_1b} is equivalent to Problem \eqref{eq_problem_sub2_1d}.
\end{proof}


\begin{lemma}
 There are at least $(N-1)$ elements of ${\bf{w}}_{\rm{opt}}$ which have the same modulus, where ${\bf{w}}_{\rm{opt}}$ is the optimal solution of Problem \eqref{eq_problem_sub2_1b}, and the remaining one element has a smaller modulus than the $(N-1)$ elements.
\end{lemma}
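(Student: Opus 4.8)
The plan is to treat Problem \eqref{eq_problem_sub2_1b} as the convex program ``minimize $\alpha^{2}$ subject to $|[\mathbf{w}]_i|^{2}\le\alpha^{2}$ for all $i$ and to the two linear equalities,'' written compactly as $A\mathbf{w}=\mathbf{b}$ with $A$ the $2\times N$ matrix whose $i$-th column is $\mathbf{c}_i=(e^{j\theta_i},e^{j\eta_i})^{\rm T}$. Two structural facts carry the argument. First, since $\Omega_1\ne\Omega_2$, any two columns $\mathbf{c}_i,\mathbf{c}_{i'}$ are linearly independent (the $2\times2$ determinant is $e^{j(\theta_i+\eta_{i'})}-e^{j(\theta_{i'}+\eta_i)}\ne0$), hence any set of at least two columns spans $\mathbb{C}^{2}$; in particular $\mathrm{rank}\,A=2$ and $\dim_{\mathbb{C}}\ker A=N-2$. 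Second, because $d_1,d_2>0$ the feasible set is a nonempty affine subspace, and intersecting it with the compact polydisc $\{\,|[\mathbf{w}]_i|\le\alpha\,\}$ shows the optimum is attained; write $\mathbf{w}_{\rm opt}$ for an optimizer, $\alpha$ for its maximal modulus, $S=\{i:|[\mathbf{w}_{\rm opt}]_i|=\alpha\}$ for the active set and $T$ for its complement.

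The core step is to show $|T|\le1$, which is the ``$\ge N-1$ equal moduli'' claim. Suppose for contradiction $|T|\ge2$. I would construct a strictly better feasible point by uniformly shrinking the active coordinates: seek $\mathbf{u}\in\ker A$ with $[\mathbf{u}]_i=-\epsilon[\mathbf{w}_{\rm opt}]_i$ for every $i\in S$ and small $\epsilon>0$. The requirement $A\mathbf{u}=\mathbf{0}$ becomes $\sum_{i\in T}[\mathbf{u}]_i\mathbf{c}_i=\epsilon\sum_{i\in S}[\mathbf{w}_{\rm opt}]_i\mathbf{c}_i$, and since $|T|\ge2$ the columns $\{\mathbf{c}_i\}_{i\in T}$ span $\mathbb{C}^{2}$, so $([\mathbf{u}]_i)_{i\in T}$ can be solved for, with $|[\mathbf{u}]_i|\le K\epsilon$ on $T$ for a constant $K$ depending only on $A$ and $\mathbf{w}_{\rm opt}$. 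Then $\mathbf{w}_{\rm opt}+\mathbf{u}$ is feasible, and $|[\mathbf{w}_{\rm opt}+\mathbf{u}]_i|=(1-\epsilon)\alpha<\alpha$ for $i\in S$ while, since $|[\mathbf{w}_{\rm opt}]_i|<\alpha$ strictly on $T$, $|[\mathbf{w}_{\rm opt}+\mathbf{u}]_i|\le|[\mathbf{w}_{\rm opt}]_i|+K\epsilon<\alpha$ on $T$ for $\epsilon$ small. Thus $\max_i|[\mathbf{w}_{\rm opt}+\mathbf{u}]_i|^{2}<\alpha^{2}$, contradicting optimality. Hence $|T|\le1$, so at least $N-1$ coordinates share the modulus $\alpha$.

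It remains to rule out $|T|=0$, i.e. to get the last coordinate \emph{strictly} below $\alpha$. Here I would use the KKT conditions of the convex program at $\mathbf{w}_{\rm opt}$: with multipliers $\mu_i\ge0$ ($\sum_i\mu_i=1$, $\mu_i=0$ off the active set) and $\nu\in\mathbb{C}^{2}$ for $A\mathbf{w}=\mathbf{b}$, Wirtinger stationarity gives $\mu_i[\mathbf{w}_{\rm opt}]_i=-\tfrac12 g_i$ with $g_i:=(A^{\rm H}\nu)_i=\nu_1 e^{-j\theta_i}+\nu_2 e^{-j\eta_i}$; moreover $\nu\ne\mathbf{0}$ (else $\mathbf{w}_{\rm opt}=\mathbf{0}$, impossible since $d_1>0$), so $g_i\ne0$ when $\mu_i>0$ and $g_i=0$ when $\mu_i=0$. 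Since $g_i=0$ means $\mathbf{c}_i\perp\nu$ and the $\mathbf{c}_i$ are pairwise independent, at most one index has $g_i=0$, so $|T|\le\#\{i:\mu_i=0\}\le1$; and if $|T|=0$ then $[\mathbf{w}_{\rm opt}]_i=-\alpha\,g_i/|g_i|$ for every $i$, whence substituting into $A\mathbf{w}_{\rm opt}=\mathbf{b}$ leaves two complex equations pinning down $\nu$ (up to a positive scale) and $\alpha$. The main obstacle is exactly to show this residual system cannot be solved: it is transparently over-determined for generic $(\Omega_1,\Omega_2,d_1,d_2,m/M)$ — and in every such case $|T|=1$, finishing the proof — but turning this into a parameter-free argument (or stating the non-degeneracy hypothesis under which the lemma is intended) is the delicate point; concretely one would want, whenever the KKT system above admits a candidate, to exhibit a descent direction inside the $(N-2)$-dimensional space $\ker A$ that contradicts that candidate's optimality.
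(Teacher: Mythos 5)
Your core argument for the main claim is correct and is, at bottom, the same device the paper uses: both proofs exploit the fact that the two equality constraints of Problem \eqref{eq_problem_sub2_1b} pin down only two complex degrees of freedom, uniformly rescale one block of coordinates, and absorb the resulting violation of the constraints in a complementary block whose moduli have slack, thereby strictly lowering the maximum modulus and contradicting optimality. The paper shrinks the $N-2$ coordinates $\pi_3,\dots,\pi_N$ by $1/(1+\delta)$ and recomputes the two smallest-modulus coordinates $\pi_1,\pi_2$ as affine functions of the rest; you shrink the active set $S$ by $(1-\epsilon)$ and compensate on $T$. The two versions rest on the same unstated non-degeneracy assumption: the paper needs the columns of $A$ indexed by $\pi_1,\pi_2$ to be linearly independent, and you need the columns indexed by $T$ to span $\mathbb{C}^2$. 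Your parenthetical claim that any two columns are automatically independent is not quite right: the determinant $e^{j(\theta_i+\eta_{i'})}-e^{j(\theta_{i'}+\eta_i)}$ vanishes exactly when $(i-i')(\Omega_1-\Omega_2)\in 2\mathbb{Z}$, which can occur for admissible parameters (e.g.\ $\Omega_1-\Omega_2=1$ and $|i-i'|=2$). Since the paper silently makes the equivalent assumption, this is a shared, measure-zero caveat rather than a defect of your proof relative to the paper's.

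On your third paragraph: you are right that your argument does not exclude the case $|T|=0$ (all $N$ moduli equal), so the \emph{strict} ``smaller'' in the statement is not established --- but you are holding yourself to a standard the paper does not meet either. The paper's own proof only shows $|[\mathbf{w}_{\rm opt}]_{\pi_1}|\le|[\mathbf{w}_{\rm opt}]_{\pi_2}|=\dots=|[\mathbf{w}_{\rm opt}]_{\pi_N}|$, and immediately after Theorem 1 the text restates the conclusion as a modulus ``no more than'' that of the other $N-1$ elements and uses only the weak bounds $0\le|[\mathbf{w}_1^\star]_1|\le 1/\sqrt{N}$ and $1/\sqrt{N}\le|[\mathbf{w}_1^\star]_i|\le 1/\sqrt{N-1}$. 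So the KKT machinery you sketch to rule out $|T|=0$ is unnecessary for anything the lemma is used for downstream; the honest resolution is that ``smaller'' should read ``no larger,'' and with that reading your first two paragraphs already constitute a complete proof equivalent to the paper's.
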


\begin{proof}
We first rank the absolute weights of $\mathbf{w}_{\rm{opt}}$ as
\begin{equation} \label{eq_inequalities}
  |[\mathbf{w}_{\rm{opt}}]_{\pi_{1}}|\leq|[\mathbf{w}_{\rm{opt}}]_{\pi_{2}}|\leq|[\mathbf{w}_{\rm{opt}}]_{\pi_{3}}|
  \leq\cdots\leq|[\mathbf{w}_{\rm{opt}}]_{\pi_{N}}|
\end{equation}
Then Lemma 3 is equivalent to that the inequalities after $|[\mathbf{w}_{\rm{opt}}]_{\pi_{2}}|$ are all equalities. We prove it by using the contradiction method, i.e., we assume that there is at least one strictly less-than sign after $|[\mathbf{w}_{\rm{opt}}]_{\pi_{2}}|$, and then we prove that $\mathbf{w}_{\rm{opt}}$ is not optimal.

Since the two constraints of Problem \eqref{eq_problem_sub2_1b} are two linear equations in the space $\mathbb{C}^N$, $[\mathbf{w}_{\rm{opt}}]_{\pi_{1}}$ and $[\mathbf{w}_{\rm{opt}}]_{\pi_{2}}$ can be expressed as linear combinations of $\{[\mathbf{w}_{\rm{opt}}]_{\pi_{k}}\}_{k=3,4,...,N}$, which means that the feasible region of Problem \eqref{eq_problem_sub2_1b} is $\mathbb{C}^{N-2}$. Without loss of generality, let ${\bf{w}}_0\triangleq [[\mathbf{w}_{\rm{opt}}]_{\pi_{3}},[\mathbf{w}_{\rm{opt}}]_{\pi_{4}},...,[\mathbf{w}_{\rm{opt}}]_{\pi_{N}}]^{\rm{T}}$, and let $[\mathbf{w}_{\rm{opt}}]_{\pi_{1}}={\bf{f}}_1^{\rm{H}}{\bf{w}}_0+\beta_1$, $[\mathbf{w}_{\rm{opt}}]_{\pi_{2}}={\bf{f}}_2^{\rm{H}}{\bf{w}}_0+\beta_2$, where ${\bf{f}}_1$ and ${\bf{f}}_2$ are the combination coefficient vectors, $\beta_1$ can $\beta_2$ are constants. In the following, we will construct $\bar{\bf{w}}$, a better solution than $\mathbf{w}_{\rm{opt}}$.

We consider a point in the feasible region $\mathbb{C}^{N-2}$ close to ${\bf{w}}_0$, i.e., $\bar{\bf{w}}_0=\frac{1}{1+\delta}{\bf{w}}_0$, where $\delta$ is a small positive variable that is very close to zero. Let $w_1={\bf{f}}_1^{\rm{H}}\bar{\bf{w}}_0+\beta_1$ and $w_2={\bf{f}}_2^{\rm{H}}\bar{\bf{w}}_0+\beta_2$. Then $\bar{\bf{w}}=[w_1,w_2,\bar{\bf{w}}_0^{\rm{T}}]^{\rm{T}}$ is a feasible point of Problem \eqref{eq_problem_sub2_1b}. We have

\begin{equation}
\left\{\begin{aligned}
  &|w_1|=|{\bf{f}}_1^{\rm{H}}\bar{\bf{w}}_0+\beta_1|\\
&|w_2|=|{\bf{f}}_2^{\rm{H}}\bar{\bf{w}}_0+\beta_2|
\end{aligned}\right.
\end{equation}

When there is at least one strictly less-than sign after $|[\mathbf{w}_{\rm{opt}}]_{\pi_{2}}|$ in \eqref{eq_inequalities}, we have $|[\mathbf{w}_{\rm{opt}}]_{\pi_{1}}|\leq |[\mathbf{w}_{\rm{opt}}]_{\pi_{2}}|<|[\mathbf{w}_{\rm{opt}}]_{\pi_{N}}|$.

Suppose $|[\mathbf{w}_{\rm{opt}}]_{\pi_{N}}|-|[\mathbf{w}_{\rm{opt}}]_{\pi_{1}}|=\varepsilon_1$, where $\varepsilon_1>0$. In a sequel,
\begin{equation}
\begin{aligned}
&|[\bar{\bf{w}}]_{\pi_{N}}|-|w_1|\\
=&\frac{1}{1+\delta}|[\mathbf{w}_{\rm{opt}}]_{\pi_{N}}|-\frac{1}{1+\delta}|[\mathbf{w}_{\rm{opt}}]_{\pi_{1}}|\\
&~~+\frac{1}{1+\delta}|[\mathbf{w}_{\rm{opt}}]_{\pi_{1}}|-|w_1|\\
=&\frac{\varepsilon_1}{1+\delta}+\frac{1}{1+\delta}|{\bf{f}}_1^{\rm{H}}{\bf{w}}_0+\beta_1|-|{\bf{f}}_1^{\rm{H}}\bar{\bf{w}}_0+\beta_1|\\
=&\frac{\varepsilon_1}{1+\delta}+|\frac{1}{1+\delta}{\bf{f}}_1^{\rm{H}}{\bf{w}}_0+\frac{1}{1+\delta}\beta_1|-|\frac{1}{1+\delta}{\bf{f}}_1^{\rm{H}}{\bf{w}}_0+\beta_1|\\
\geq & \frac{\varepsilon_1}{1+\delta}-|(\frac{1}{1+\delta}{\bf{f}}_1^{\rm{H}}{\bf{w}}_0+\frac{1}{1+\delta}\beta_1)-(\frac{1}{1+\delta}{\bf{f}}_1^{\rm{H}}{\bf{w}}_0+\beta_1)|\\
=&\frac{\varepsilon_1-\delta |\beta_1|}{1+\delta}
\end{aligned}
\end{equation}
Hence, there exists a sufficiently small $\delta_1=\frac{\varepsilon_1}{1+|\beta_1|}$ such that $\frac{\varepsilon_1-\delta_1 |\beta_1|}{1+\delta}>0$, i.e. $|w_1|<|[\bar{\bf{w}}]_{\pi_{N}}|$. Similarly, Supposing $|[\mathbf{w}_{\rm{opt}}]_{\pi_{N}}|-|[\mathbf{w}_{\rm{opt}}]_{\pi_{2}}|=\varepsilon_2$, we can conclude that there exists a sufficiently small $\delta_2=\frac{\varepsilon_1}{1+|\beta_2|}$ such that $|w_2|<|[\bar{\bf{w}}]_{\pi_{N}}|$. Let $\delta=\min\{\delta_1,\delta_2\}$, so there is always $|w_1|<|[\bar{\bf{w}}]_{\pi_{N}}|$ and $|w_2|<|[\bar{\bf{w}}]_{\pi_{N}}|$. In other words, $[\bar{\bf{w}}]_{\pi_{N}}$ is the largest-modulus element of $\bar{\bf{w}}$. Therefore,
\begin{equation}
\mathop{\mathrm{Max}}\limits_{i}{\{[\bar{\bf{w}}]_{i}[\bar{\bf{w}}]_{i}^{*}\}}=|[\bar{\bf{w}}]_{\pi_{N}}|^2<|[\mathbf{w}_{\rm{opt}}]_{\pi_{N}}|^2
\end{equation}
which means that $\bar{\bf{w}}$ is a better solution of Problem \eqref{eq_problem_sub2_1b} than $\mathbf{w}_{\rm{opt}}$. This is contradictory against that $\mathbf{w}_{\rm{opt}}$ is the optimal solution of Problem \eqref{eq_problem_sub2_1b}; so the assumption that there is at least one strictly less-than sign after $|[\mathbf{w}_{\rm{opt}}]_{\pi_{2}}|$ in \eqref{eq_inequalities} does not hold. Hence, the inequalities after $|[\mathbf{w}_{\rm{opt}}]_{\pi_{2}}|$ in \eqref{eq_inequalities} are all equalities, i.e., Lemma 3 holds.
\end{proof}

Combining Lemma 3 and Lemma 2, we can easily conclude that Theorem 1 holds.

\section{Proof of Theorem 2}
\begin{equation}
\begin{aligned}
&\big||{{\bf{b}}^{\rm{H}}}{{\bf{w}}^ \star }| - |{{\bf{b}}^{\rm{H}}}{\bf{w}}_1^ \star |\big|\le\big|{{\bf{b}}^{\rm{H}}}{{\bf{w}}^ \star } - {{\bf{b}}^{\rm{H}}}{\bf{w}}_1^ \star \big|\\
 \le& \big|[{\bf{b}}]_1^*({[{{\bf{w}}^ \star }]_1} - {[{\bf{w}}_1^ \star ]_1})\big| + \big|[{\bf{b}}]_2^*({[{{\bf{w}}^ \star }]_2} - {[{\bf{w}}_1^ \star ]_2})\big| + \cdot\cdot\cdot+\\
&\big|[{\bf{b}}]_N^*({[{{\bf{w}}^ \star }]_N} - {[{\bf{w}}_1^ \star ]_N})\big|\\
 =& \big|{[{{\bf{w}}^ \star }]_1} - {[{\bf{w}}_1^ \star ]_1}\big| + \big|{[{{\bf{w}}^ \star }]_2} - {[{\bf{w}}_1^ \star ]_2}\big| +\cdot\cdot\cdot+\\
&\big|{[{{\bf{w}}^ \star }]_N} - {[{\bf{w}}_1^ \star ]_N}\big|
\end{aligned}
\end{equation}

According to the CM normalization in \eqref{eq_CMNorm}, the elements of ${\bf{w}}^ \star$ have exactly the same phases as those of ${\bf{w}}_1^ \star$; hence
\begin{equation}
\begin{aligned}
&\big|{[{{\bf{w}}^ \star }]_i} - {[{\bf{w}}_1^ \star ]_i}\big|=\big||{[{{\bf{w}}^ \star }]_i}| - |{[{\bf{w}}_1^ \star ]_i}|\big|\\
=&\big|\frac{1}{\sqrt{N}} - |{[{\bf{w}}_1^ \star ]_i}|\big|
\end{aligned}
\end{equation}
where $i=1,2,...,N$.

Since $0\leq|[{\bf{w}}_1^ \star ]_1|\leq\frac{1}{\sqrt{N}}$, we have $\big|{[{{\bf{w}}^ \star }]_1} - {[{\bf{w}}_1^ \star ]_1}\big|\leq \frac{1}{\sqrt{N}}$. Since $\frac{1}{\sqrt{N}}\leq|[\mathbf{w}_1^\star]_2|=|[\mathbf{w}_1^\star]_3|=...=|[\mathbf{w}_1^\star]_N|\leq\frac{1}{\sqrt{N-1}}$, we have
\begin{equation}
\begin{aligned}
&\big|{[{{\bf{w}}^ \star }]_i} - {[{\bf{w}}_1^ \star ]_i}\big|\leq \frac{1}{\sqrt{N-1}}-\frac{1}{\sqrt{N}}
\end{aligned}
\end{equation}
where $i=2,3,...,N$.

Hence, we obtain
\begin{equation}
\begin{aligned}
&\big||{{\bf{b}}^{\rm{H}}}{{\bf{w}}^ \star }| - |{{\bf{b}}^{\rm{H}}}{\bf{w}}_1^ \star |\big|\\
\leq&\frac{1}{\sqrt{N}}+(N-1)\left(\frac{1}{\sqrt{N-1}}-\frac{1}{\sqrt{N}}\right)\\
<&\frac{2}{\sqrt{N}}
\end{aligned}
\end{equation}



\begin{thebibliography}{10}
\providecommand{\url}[1]{#1}
\csname url@samestyle\endcsname
\providecommand{\newblock}{\relax}
\providecommand{\bibinfo}[2]{#2}
\providecommand{\BIBentrySTDinterwordspacing}{\spaceskip=0pt\relax}
\providecommand{\BIBentryALTinterwordstretchfactor}{4}
\providecommand{\BIBentryALTinterwordspacing}{\spaceskip=\fontdimen2\font plus
\BIBentryALTinterwordstretchfactor\fontdimen3\font minus
  \fontdimen4\font\relax}
\providecommand{\BIBforeignlanguage}[2]{{%
\expandafter\ifx\csname l@#1\endcsname\relax
\typeout{** WARNING: IEEEtran.bst: No hyphenation pattern has been}%
\typeout{** loaded for the language `#1'. Using the pattern for}%
\typeout{** the default language instead.}%
\else
\language=\csname l@#1\endcsname
\fi
#2}}
\providecommand{\BIBdecl}{\relax}
\BIBdecl

\bibitem{andrews2014will}
J.~G. Andrews, S.~Buzzi, W.~Choi, S.~V. Hanly, A.~Lozano, A.~C. Soong, and
  J.~C. Zhang, ``What will {5G} be?'' \emph{IEEE Journal on Selected Areas in
  Communications}, vol.~32, no.~6, pp. 1065--1082, Apr. 2014.

\bibitem{niu2015survey}
Y.~Niu, Y.~Li, D.~Jin, L.~Su, and A.~V. Vasilakos, ``A survey of millimeter
  wave communications (mmwave) for {5G}: opportunities and challenges,''
  \emph{Wireless Networks}, vol.~21, no.~8, pp. 2657--2676, Apr. 2015.

\bibitem{rapp2013mmIEEEAccess}
T.~S. Rappaport, S.~Sun, R.~Mayzus, H.~Zhao, Y.~Azar, K.~Wang, G.~N. Wong,
  J.~K. Schulz, M.~Samimi, and F.~Gutierrez, ``Millimeter wave mobile
  communications for {5G} cellular: It will work!'' \emph{IEEE Access}, vol.~1,
  pp. 335--349, 2013.

\bibitem{ding2014performance}
Z.~Ding, Z.~Yang, P.~Fan, and H.~V. Poor, ``On the performance of
  non-orthogonal multiple access in {5G} systems with randomly deployed
  users,'' \emph{IEEE Signal Processing Letters}, vol.~21, no.~12, pp.
  1501--1505, Dec. 2014.

\bibitem{saito2013non}
Y.~Saito, Y.~Kishiyama, A.~Benjebbour, T.~Nakamura, A.~Li, and K.~Higuchi,
  ``Non-orthogonal multiple access {(NOMA)} for cellular future radio access,''
  in \emph{IEEE Vehicular Technology Conference (IEEE VTC Spring)}.\hskip 1em
  plus 0.5em minus 0.4em\relax Dresden, Germany: IEEE, 2013, pp. 1--5.

\bibitem{Choi2014NOMA}
J.~Choi, ``Non-orthogonal multiple access in downlink coordinated two-point
  systems,'' \emph{IEEE Communications Letters}, vol.~18, no.~2, pp. 313--316,
  Feb. 2014.

\bibitem{Ding2015Cooperative}
Z.~Ding, M.~Peng, and H.~V. Poor, ``Cooperative non-orthogonal multiple access
  in 5g systems,'' \emph{IEEE Communications Letters}, vol.~19, no.~8, pp.
  1462--1465, Aug. 2015.

\bibitem{Dai2015NOMA5G}
L.~Dai, B.~Wang, Y.~Yuan, S.~Han, C.~l.~I, and Z.~Wang, ``Non-orthogonal
  multiple access for {5G}: solutions, challenges, opportunities, and future
  research trends,'' \emph{IEEE Communications Magazine}, vol.~53, no.~9, pp.
  74--81, Sept. 2015.

\bibitem{Benjebbour2013ConceptNOMA}
A.~Benjebbour, Y.~Saito, Y.~Kishiyama, A.~Li, A.~Harada, and T.~Nakamura,
  ``Concept and practical considerations of non-orthogonal multiple access
  {(NOMA)} for future radio access,'' in \emph{International Symposium on
  Intelligent Signal Processing and Communication Systems}, Nov. 2013, pp.
  770--774.

\bibitem{Saito2013syslevl}
Y.~Saito, A.~Benjebbour, Y.~Kishiyama, and T.~Nakamura, ``System-level
  performance evaluation of downlink non-orthogonal multiple access {(NOMA)},''
  in \emph{IEEE 24th Annual International Symposium on Personal, Indoor, and
  Mobile Radio Communications (PIMRC)}, Sept. 2013, pp. 611--615.

\bibitem{Ding2017random}
Z.~Ding, P.~Fan, and H.~V. Poor, ``Random beamforming in millimeter-wave {NOMA}
  networks,'' \emph{IEEE Access}, vol.~5, pp. 7667--7681, Feb. 2017.

\bibitem{Daill2017}
B.~Wang, L.~Dai, Z.~Wang, N.~Ge, and S.~Zhou, ``Spectrum and energy efficient
  beamspace {MIMO-NOMA} for millimeter-wave communications using lens antenna
  array,'' \emph{IEEE Journal on Selected Areas in Communications}, vol.~35,
  no.~10, pp. 2370--2382, Oct. 2017.

\bibitem{Xia_2011_60GHz_Tech}
S.~K. Yong, P.~Xia, and A.~Valdes-Garcia, \emph{{60GHz} Technology for {Gbps}
  {WLAN} and {WPAN}: from Theory to Practice}.\hskip 1em plus 0.5em minus
  0.4em\relax West Sussex, UK: Wiley, 2011.

\bibitem{xia_2008_prac_ante_traning}
P.~Xia, H.~Niu, J.~Oh, and C.~Ngo, ``Practical antenna training for millimeter
  wave {MIMO} communication,'' in \emph{IEEE Vehicular Technology Conference
  (VTC) 2008}.\hskip 1em plus 0.5em minus 0.4em\relax Calgary, Canada: IEEE,
  Oct. 2008, pp. 1--5.

\bibitem{wang_2009_beam_codebook}
J.~Wang, Z.~Lan, C.~Pyo, T.~Baykas, C.~Sum, M.~Rahman, J.~Gao, R.~Funada,
  F.~Kojima, and H.~Harada, ``Beam codebook based beamforming protocol for
  multi-{G}bps millimeter-wave {WPAN} systems,'' \emph{IEEE Journal on Selected
  Areas in Communications}, vol.~27, no.~8, pp. 1390--1399, Oct. 2009.

\bibitem{alkhateeb2014mimo}
A.~Alkhateeb, J.~Mo, N.~Gonz{\'a}lez-Prelcic, and R.~Heath, ``{MIMO} precoding
  and combining solutions for millimeter-wave systems,'' \emph{IEEE
  Communications Magazine}, vol.~52, no.~12, pp. 122--131, Dec. 2014.

\bibitem{roh2014millimeter}
W.~Roh, J.-Y. Seol, J.~Park, B.~Lee, J.~Lee, Y.~Kim, J.~Cho, K.~Cheun, and
  F.~Aryanfar, ``Millimeter-wave beamforming as an enabling technology for {5G}
  cellular communications: theoretical feasibility and prototype results,''
  \emph{IEEE Communications Magazine}, vol.~52, no.~2, pp. 106--113, Feb. 2014.

\bibitem{sun2014mimo}
S.~Sun, T.~S. Rappaport, R.~Heath, A.~Nix, and S.~Rangan, ``{MIMO} for
  millimeter-wave wireless communications: beamforming, spatial multiplexing,
  or both?'' \emph{IEEE Communications Magazine}, vol.~52, no.~12, pp.
  110--121, Dec. 2014.

\bibitem{Fang2016rescealloc}
F.~Fang, H.~Zhang, J.~Cheng, and V.~C.~M. Leung, ``Energy-efficient resource
  allocation for downlink non-orthogonal multiple access network,'' \emph{IEEE
  Transactions on Communications}, vol.~64, no.~9, pp. 3722--3732, Sept. 2016.

\bibitem{peng2015enhanced}
Y.~Peng, Y.~Li, and P.~Wang, ``An enhanced channel estimation method for
  millimeter wave systems with massive antenna arrays,'' \emph{IEEE
  Communications Letters}, vol.~19, no.~9, pp. 1592--1595, Sept. 2015.

\bibitem{wang2015multi}
P.~Wang, Y.~Li, L.~Song, and B.~Vucetic, ``Multi-gigabit millimeter wave
  wireless communications for {5G}: from fixed access to cellular networks,''
  \emph{IEEE Communications Magazine}, vol.~53, no.~1, pp. 168--178, Jan. 2015.

\bibitem{Lee2014exploiting}
J.~Lee, G.-T. Gil, and Y.~H. Lee, ``Exploiting spatial sparsity for estimating
  channels of hybrid {MIMO} systems in millimeter wave communications,'' in
  \emph{IEEE Global Communications Conference}.\hskip 1em plus 0.5em minus
  0.4em\relax IEEE, 2014, pp. 3326--3331.

\bibitem{Gao2016ChannelEst}
Z.~Gao, C.~Hu, L.~Dai, and Z.~Wang, ``Channel estimation for millimeter-wave
  massive {MIMO} with hybrid precoding over frequency-selective fading
  channels,'' \emph{IEEE Communications Letters}, vol.~20, no.~6, pp.
  1259--1262, June 2016.

\bibitem{xiao2016codebook}
Z.~Xiao, T.~He, P.~Xia, and X.-G. Xia, ``Hierarchical codebook design for
  beamforming training in millimeter-wave communication,'' \emph{IEEE
  Transactions on Wireless Communications}, vol.~15, no.~5, pp. 3380--3392, May
  2016.

\bibitem{alkhateeb2014channel}
A.~Alkhateeb, O.~El~Ayach, G.~Leus, and R.~Heath, ``Channel estimation and
  hybrid precoding for millimeter wave cellular systems,'' \emph{IEEE Journal
  of Selected Topics in Signal Processing}, vol.~8, no.~5, pp. 831--846, Oct.
  2014.

\bibitem{TseFundaWC}
D.~Tse and P.~Viswanath, \emph{Fundamentals of Wireless Communication}.\hskip
  1em plus 0.5em minus 0.4em\relax New York, USA: Cambridge University Press,
  2005.

\end{thebibliography}
\end{document}